\DeclareFontFamily{OT1}{rsfs}{}
\DeclareFontShape{OT1}{rsfs}{m}{n}{<5> rsfs5 <7> rsfs7 <10> rsfs10}{}
\DeclareSymbolFont{mathrsfs}{OT1}{rsfs}{m}{n}
\DeclareSymbolFontAlphabet{\mathrsfs}{mathrsfs}
\newcommand*{\supp}{{\rm supp}\,}
\newcommand*{\ket}[1]{\left| #1 \right\rangle}
\newcommand*{\bra}[1]{\left\langle{#1}\right|}
\newtheorem{theorem}{Theorem}
\newtheorem{corollary}{Corollary}
\begin{document}
\title{Shapes of leading tunnelling trajectories for single-electron molecular ionization} 

\author{Denys I. Bondar}
\email{dbondar@princeton.edu}
\affiliation{Department of Physics and Astronomy and Guelph-Waterloo Physics Institute, University of Waterloo, Waterloo, Ontario N2L 3G1, Canada}
\affiliation{National Research Council of Canada, Ottawa, Ontario K1A 0R6, Canada}
\affiliation{Present Address:  Frick Laboratory, Department of Chemistry, Princeton University, Princeton, New Jersey 08544, USA}

\author{Wing-Ki Liu}
\email{wkliu@uwaterloo.ca}
\affiliation{Department of Physics and Astronomy and Guelph-Waterloo Physics Institute, University of Waterloo, Waterloo, Ontario N2L 3G1, Canada}
\affiliation{Department of Physics, The Chinese University of Hong Kong,
Shatin, NT, Hong Kong}

\begin{abstract}
Based on the geometrical approach to tunnelling by P.D. Hislop and I.M. Sigal [Memoir. AMS {\bf 78},  No. 399 (1989)], we introduce the concept of a leading tunnelling trajectory. It is then proven that leading tunnelling trajectories for single-active-electron models of molecular tunnelling ionization (i.e., theories where a molecular potential is modelled by a single-electron multi-centre potential) are linear in the case of short range interactions and ``almost'' linear in the case of long range interactions. The results are presented on both the formal and physically intuitive levels. Physical implications of the obtained results are discussed.  
\end{abstract}

\pacs{03.65.Xp, 33.80.Rv, 32.80.Rm, 02.40.Hw}

\maketitle

\section{Introduction}

Recent advances in experimental investigations of single-electron molecular ionization in a low frequency strong laser field \cite{Talebpour1996, Guo1998, DeWitt2001, Wells2002, Litvinyuk2003, Pavivcic2007, Kumarappan2008, Staudte2009, Hoff2010} have created a demand for a theory of this phenomenon. There is a broad variety of theoretical approaches to this problem: molecular extensions of the analytical atomic strong-field methods \cite{MuthBohm2000, Tong2002, Kjeldsen2004, Kjeldsen2006, Fabrikant2009, Fabrikant2010, Murray2010, Bin2010, Walters2010, Murray2011, Murray2011a}, numerical methods explicitly incorporating tunnelling (the Floquet approach \cite{Madsen1998, Chu2000}, the complex scaling method \cite{Plummer1996}, and the method of complex absorbing potentials \cite{Otobe2004}), numerical solutions of the time-dependent Schr\"{o}dinger equation within the single-active-electron approximation \cite{Chelkowski1992, Awasthi2008, Luhr2008, Petretti2010, Abu-samha2009, Abu-samha2010, Spanner2010}, numerical solutions of the time-dependent Schr\"{o}dinger equation for two-electron systems \cite{Saenz2000, Harumiya2002, Awasthi2006, Vanne2008, Vanne2009, Vanne2010}, and treatments based on the time-dependent density functional theory \cite{Chu2004, Telnov2009, Fowe2010, Chu2011}.

As far as low frequency laser radiation is concerned, one can ignore the time-dependence of the laser and consider the corresponding quasistatic picture, which is obtained in the limit the laser frequency $\omega\to 0$. In this limit, single electron molecular ionization is realized by quantum tunnelling. This approximation is valid from qualitative and quantitative points of view, and it tremendously simplifies the theoretical analysis of the problem at hand. Such single-active-electron approaches to molecular ionization, where an electron is assumed to interact with multiple centres that model the molecule and a static field that models the laser, are among the most popular. Analytical and semi-analytical versions of these methods, which are based on the quasiclassical approximation \cite{Tong2002, Fabrikant2009, Murray2010, Bin2010, Fabrikant2010, Murray2011} are indeed quite successful in interpreting and explaining available experimental data. However, these quasiclassical theories heavily rely on the assumption that the electron tunnels along a straight trajectory. The purpose of the current paper is to study the reliability of this hypothesis. 

Relying on the geometrical approach to many dimensional tunnelling by Hislop and Sigal \cite{Sigal1988a, Sigal1988, Hislop1989a, Hislop1996}, which is a mathematically rigorous reformulation of the instanton method, we first introduce the notion of leading tunnelling trajectories. Then, we analyze their shapes in the context of single-active-electron molecular tunnelling. It will be rigorously proven that the assumption of ``almost'' linearity of leading tunnelling trajectories is satisfied in almost all the situations of practical interest. Such results justify the above mentioned models, and perhaps, open new ways of further development of quasiclassical approaches to molecular ionization.

The rest of the paper is organized as follows: Section \ref{Sec2} is a concise  introduction to the Hislop and Sigal geometrical ideas and related topics. The proof of the results regarding the shapes of the leading tunnelling trajectories are presented in Sec. \ref{Sec3}. We employ multiple spherically symmetric potential wells, which is the simplest type of molecular potentials, to estimate single electron molecular tunnelling rates in Sec. \ref{Sec4}. Leading tunnelling trajectories are numerically computed in Sec. \ref{Sec5} for model diatomic molecules of different geometries, and the rule of thumb on how to find the shapes of leading tunnelling trajectories is formulated. Conclusions are drawn in the last section. Finally, the Appendix contains the derivation of the multidimensional generalization of the Landau method of calculating quasiclassical matrix elements.

\section{Mathematical background}\label{Sec2}

The instanton approach is one of the methods for the description of tunnelling \cite{Holstein1996}. It can be introduced as a result of application of the saddle point approximation to the modification of the Feynman integral obtained by performing the transformation of time $t\to -i\tau$ to ``imaginary time'' $\tau$ (i.e., the Wick rotation). This technique has turned out to be tremendously fruitful in many branches of physics and chemistry (see, e.g., Refs. \cite{Vainshtein1982, Leggett1984, Benderskii1994, Razavy2003a}). 

We shall reiterate the main steps in deriving the instanton approach. Let us consider a quantum system with the Hamiltonian  
\begin{eqnarray}\label{InstSec_Hamiltonian}
\hat{H} = -\Delta/(2m) + U({\bf x}),
\end{eqnarray}
where $\Delta$ is the $n$-dimensional Laplacian and ${\bf x}$ is an $n$-dimensional vector. The Feynman integral representation of the propagator reads \cite{Feynman1965} (atomic units are used throughout, unless stated otherwise) 
\begin{eqnarray}
&& \bra{ {\bf x}_f } e^{-i\hat{H}t_0} \ket{{\bf x}_i} = N \int \mathrsfs{D}[{\bf x}(t)] e^{ i S[{\bf x}(t)] }, \label{FeynmanRepPropag}\\
&& S[{\bf x}] = \int_0^{t_0} \mathrsfs{L}({\bf x}, \dot{\bf x}) dt, 
\quad \mathrsfs{L}({\bf x}, \dot{\bf x}) = \frac{\dot{\bf x}^2}{2m} - U({\bf x}), \nonumber
\end{eqnarray}
where the path integral sums up all the paths that obey boundary conditions ${\bf x}(0) = {\bf x}_i$ and ${\bf x}(t_0) = {\bf x}_f$, and $\dot{\bf x}(t) \equiv d {\bf x}(t) /dt$. After performing the Wick rotation, Eq. (\ref{FeynmanRepPropag}) becomes 
\begin{eqnarray}\label{EuclidianFeynmanRepProp}
&& \bra{ {\bf x}_f } e^{-\hat{H}\tau_0} \ket{{\bf x}_i} = N \int \mathrsfs{D}[{\bf x}(\tau)] e^{ -\tilde{S}[{\bf x}(\tau)] }, \\
&& \tilde{S}[{\bf x}] = \int_0^{\tau_0} \left[ \frac{1}{2m} \left(\frac{d{\bf x}(\tau)}{d\tau}\right)^2 + U({\bf x}(\tau))  \right]d\tau, \nonumber
\end{eqnarray}
where $\tau_0 = it_0$ and $\tilde{S}$ is called {\it the Euclidian action}.  
Hence, one can say that the transition from Eq. (\ref{FeynmanRepPropag}) to Eq. (\ref{EuclidianFeynmanRepProp}) is achieved by the following formal substitutions
\begin{eqnarray}
 t \to -i\tau, \quad {\bf x}(t) \to {\bf x}(\tau), \quad \dot{\bf x}(t) \to id{\bf x}(\tau)/d\tau. \label{WickRotationETC}
\end{eqnarray}
Comparing the actions $S$ and $\tilde{S}$, one concludes that the motion in imaginary time is equivalent to the motion in the inverted potential. In other words, the actions $S$ and $\tilde{S}$ are connected by the substitution
\begin{eqnarray}
U \to -U, \qquad (E \to -E). \label{FlippingPotentialEnergy}
\end{eqnarray}
The final step in the instanton approach is the application of the saddle point approximation to the Euclidian Feynman integral in Eq. (\ref{EuclidianFeynmanRepProp}) assuming that $\tau_0 \to \infty$.

However, there is a long ongoing discussion \cite{Dunne2000, Eldad1977, Mueller-Kirsten2001, VanBaal1986} whether the instanton approach agrees with the quasiclassical approximation for tunnelling; some observations have been made that these two methods may disagree up to a pre-exponential factor. Furthermore,  as it has been pointed out in Ref. \cite{Leggett1984}, the instanton approach in the formulation presented so far [substitutions (\ref{WickRotationETC})] not only looks like a ``highly dubious manoeuvre,'' but also gives no prescription for getting a correct pre-exponential factor. In authors' opinion, such discrepancies mainly occur because the Feynman integral is just a heuristic construction without a sound mathematical ground \cite{Schulman2005}. The absence of strict rules of calculation of the Feynman integral makes impossible any definitive judgment of a particular result. Consequently,  a natural question arises how the instanton method can be safely used and what the meaning of the substitutions (\ref{WickRotationETC}) and (\ref{FlippingPotentialEnergy}) is.

The mathematical physics community has reinterpreted the instanton approach rigorously (see, e.g., Refs. \cite{Agmon1982, Carmona1981, Helffer1988, Herbst1993, Hislop1989a, Hislop1996, Sigal1988a, Sigal1988} and references therein), and the corresponding analysis answers both questions. Moreover, this rigorous interpretation is extremely useful because it can be implemented as an effective numerical method, which will lead to a clear physical picture applicable to a broad class of problems. We shall review briefly the above cited works since on the one hand, they are often unfamiliar to physicists, and on the other hand, they may be challenging to read for non-specialists in mathematical physics.

Historically, the first problem considered within such a framework was ``how fast does a bound state decay at infinity?'' \cite{Agmon1982, Carmona1981} (see also Sec. 3 of Ref. \cite{Hislop1996}). Let us clearly pose the question. Consider the Hamiltonian (\ref{InstSec_Hamiltonian}) as a self-adjoint operator on $\mathrsfs{L}_2(\mathbb{R}^n)$ -- the space of square-integrable functions. A bound state wave function $\psi \in \mathrsfs{L}_2(\mathbb{R}^n)$ is a normalizable eigenfunction of such a Hamiltonian, $\hat{H}\psi = E\psi$. Since the normalization integral converges, the bound state wave function $\psi = \psi({\bf x})$ must vanish as  $\| {\bf x} \| \to \infty$. Therefore, we want to determine how this decay is affected by the potential $U$. This question can be answered very elegantly if we confine ourself to {\it an upper bound} on the rate of decay.

To obtain this upper bound, we need to introduce first some geometrical notions. Let $M$ be a real $n$-dimensional manifold (intuitively, $M$ is some $n$-dimensional surface). The tangent space at a point ${\bf x}\in M$, denoted by $T_{\bf x}(M)$, is a real linear vector space $\mathbb{R}^n$ that intuitively contains all the possible ``directions'' in which one can tangentially pass through ${\bf x}$. A metric is an assignment of an inner (scalar) product to $T_{\bf x}(M)$ for every ${\bf x} \in M$.

Let ${\bf x} \in \mathbb{R}^n$ and ${\bm \xi}, {\bm \eta} \in T_{\bf x}(M)$. We define a (degenerate) metric by
\begin{eqnarray}\label{AgmonMetricDeff}
\langle {\bm \xi}, {\bm \eta} \rangle_{\bm x} \equiv 2m(U({\bf x}) - E)_+ \langle {\bm \xi}, {\bm \eta} \rangle,
\end{eqnarray}
where $\langle {\bm \xi}, {\bm \eta} \rangle \equiv {\bm \xi}\cdot {\bm \eta} = \xi_1 \eta_1 + \ldots + \xi_n \eta_n$ is the Euclidean inner product and $f({\bf x})_+ \equiv \max\{ f({\bf x}), 0\}$. Following the convention used in mathematical literature, we shall call metric (\ref{AgmonMetricDeff}) as {\it the Agmon metric}.

Having introduced the metric, we can equip the manifold $M$  with many geometrical notions such as distance, angle, volume, etc. The length of a differentiable path ${\bm \gamma} : [0,1] \to \mathbb{R}^n$ in the Agmon metric is defined by
\begin{eqnarray}\label{AgmonLengthOfCurve}
L_A({\bm \gamma}) &=& \int_0^1 \| \dot{\bm \gamma}(t) \|_{{\bm \gamma}(t)} dt \nonumber\\
&=& \sqrt{2m} \int_0^1 [U({\bm \gamma}(t)) - E]_+^{1/2} \| \dot{\bm \gamma}(t) \| dt,
\end{eqnarray}
where $\| {\bm \xi} \| = \sqrt{\langle {\bm \xi}, {\bm \xi} \rangle}$ is the Euclidian (norm) length, and $\| {\bm \xi} \|_{\bf x} = \sqrt{\langle {\bm \xi}, {\bm \xi} \rangle_{\bf x}}$. The path of a minimal length is called a geodesic. Finally,  {\it the Agmon distance between points ${\bf x}, {\bf y} \in \mathbb{R}^n$}, denoted by $\rho_E({\bf x}, {\bf y})$, is the length of the shortest geodesic in the Agmon metric connecting ${\bf x}$ to ${\bf y}$. 

Before going further, we would like to clarify the physical meaning of the Agmon metric. Let us recall {\it the Jacobi theorem} from classical mechanics (see page 150 of Ref. \cite{Gustafson2003} and page 247 of Ref. \cite{Arnold1989}): The classical trajectories of the system with the potential $U({\bf x})$ and  a total energy $E$ are geodesics in {\it the Jacobi metric}
\begin{eqnarray}\label{JacobiMetricDeff}
\langle\langle {\bm \xi}, {\bm \eta} \rangle\rangle_{\bf x} = 2m(E - U({\bf x}))_+ \langle {\bm \xi}, {\bm \eta} \rangle,
\end{eqnarray} 
on the set $\{ {\bf x} \in \mathbb{R}^n | U({\bf x}) \leqslant E\}$ --  the classical allowed region. The Agmon metric [Eq. (\ref{AgmonMetricDeff})] and the Jacobi metric [Eq. (\ref{JacobiMetricDeff})] are indeed connected through the substitution (\ref{FlippingPotentialEnergy}). By virtue of this analogy, we conclude that the Agmon distance has to satisfy a time-independent Hamilton-Jacobi equation, also known as an eikonal equation,
\begin{eqnarray}\label{Hamiljacobi_AgmonDistance}
| \nabla_{\bf x} \rho_E({\bf x}, {\bf y}) |^2 = 2m( U({\bf x}) - E )_+,
\end{eqnarray}
where $\nabla_{\bf x} f({\bf x}) \equiv (\partial f /\partial x_1, \ldots, \partial f/\partial x_n)$. In fact, the Agmon distance is the Euclidean version of the reduced action [here, the adjective ``Euclidian'' means the same as in Eq. (\ref{EuclidianFeynmanRepProp})]. In other words, the Agmon distance is the action of an instanton. 

Now we are in position to recall upper bounds on a bound eigenstate of the Hamiltonian (\ref{InstSec_Hamiltonian}). First, 
under very mild assumptions on $U$ (continuity, compactness of the classically allowed region, and absence of tunnelling, i.e., the spectrum of the Hamiltonian being only real), it has been proven \cite{Agmon1982} that for an arbitrary small $\epsilon > 0$, there exists a constant $0<c_{\epsilon}<\infty$, such that
\begin{eqnarray}\label{MildUpperBoundOnBoundState}
\int e^{2(1-\epsilon)\rho_E({\bf x})} |\psi({\bf x})|^2 d^n {\bf x} \leqslant c_{\epsilon},
\end{eqnarray}
where $\rho_E({\bf x}) \equiv \rho_E({\bf x}, {\bf 0})$. Roughly speaking, result (\ref{MildUpperBoundOnBoundState}) means that $\psi({\bf x}) = O\left( e^{-(1-\epsilon)\rho_E({\bf x})} \right)$. However, this result can be improved.  For any small $\epsilon > 0$,  there exists a constant $0<c_{\epsilon}<\infty$, such that the following inequality is valid under additional conditions of regularity of the potential $U$
\begin{eqnarray}\label{UpperBoundOnBoundState}
| \psi({\bf x}) | \leqslant c_{\epsilon} e^{-(1-\epsilon)\rho_E({\bf x})}.
\end{eqnarray}

Analyzing Eq. (\ref{MildUpperBoundOnBoundState}) and Eq. (\ref{UpperBoundOnBoundState}), we conclude that the Agmon distance from the origin describes the exponential factor of the wave function. Further information can be found in Refs. \cite{Agmon1982,  Herbst1993, Hislop1996} and references therein. We note that lower bounds on ground states can also be obtained by utilizing the Agmon approach \cite{Carmona1981}.  

We illustrate the power and utility of the upper bound (\ref{UpperBoundOnBoundState}) by deriving upper bounds for matrix elements and transition amplitudes in the Appendix. The former result is an estimate of the modulo square of the matrix element 
$ %\begin{eqnarray}
\bra{\psi_p} V \ket{\psi_q}, 
$ %\end{eqnarray}
where $\psi_p$ and $\psi_q$ are bound eigenstates of the Hamiltonian (\ref{InstSec_Hamiltonian}) that correspond to eigenvalues $E_p$ and $E_q$. It is demonstrated in the Appendix that for an arbitrary small $\epsilon > 0$, there exists a constant $0<c_{\epsilon}<\infty$, such that
\begin{eqnarray}\label{Inequality_LandauQuasiclassicalMatrixElem}
\left| \bra{\psi_p} V \ket{\psi_q} \right|^2 \leqslant c_{\epsilon}\int V^2 ({\bf x}) e^{-2(1-\epsilon)\left[ \rho_{E_p}({\bf x}) + \rho_{E_q}({\bf x}) \right]} d^n {\bf x},
\end{eqnarray}
which could be interpreted as, 
\begin{eqnarray}\label{LandauQuasiclassicalMatrixElemGenral}
&& \left| \bra{\psi_p} V \ket{\psi_q} \right|^2 \nonumber\\
&& \quad  = O\left( \int V^2({\bf x}) e^{-2(1-\epsilon)\left[ \rho_{E_p}({\bf x}) + \rho_{E_q}({\bf x}) \right]} d^n {\bf x} \right).
\end{eqnarray}

Simplicity of the derivation of Eq. (\ref{LandauQuasiclassicalMatrixElemGenral}) does not imply its insignificance. On the contrary, Eq. (\ref{LandauQuasiclassicalMatrixElemGenral}) is a multidimensional generalization of the Landau method of calculating quasiclassical matrix elements \cite{Landau1932} (see also page 185 of Ref. \cite{Landau_1977} and Refs. \cite{Nikitin1991, Nikitin1993}). To the best of authors' knowledge, such a generalization has not been reported before. To prove the one-dimensional version of the Landau method using analytical techniques (as it is usually done), one deals with the Stokes phenomenon (see, e.g., Ref. \cite{Meyer1989}); thus, the generalization to the multidimensional case without too restrictive assumptions is not obvious. The Agmon upper bounds lead not only to quite a trivial derivation, but also to an intuitive physical and geometrical picture. 

Now we explain briefly how these geometrical ideas are generalized to the problem of tunnelling (interested readers should consult Refs. \cite{Hislop1989a, Hislop1996, Sigal1988, Sigal1988a} and references therein for details and further development). Let $E$ be an energy of a tunnelling particle. We denote the boundary of the classically  forbidden region by $S_E$. It is assumed that $S_E$ consists of two disjoint pieces  $S_E^-$ and $S_E^+$ (i.e., $S_E = S_E^- \cup S_E^+$ and $S_E^- \cap S_E^+ = \emptyset$) -- the inside and outside turning surfaces, which are merely multidimensional analogs of turning points. Having introduced the concept of the Agmon distance, we naturally introduce two related notions: First, {\it the Agmon distance from the surface $S_E^-$ to a point ${\bf x}$, $\rho_E({\bf x}, S_E^-)$},  as the minimal Agmon distance between the point ${\bf x}$ and an arbitrary point ${\bf y} \in S_E^-$ [more rigorously, $
\rho_E({\bf x}, S_E^-) =  \inf_{{\bf y} \in S_E^-}  \rho_E({\bf x}, {\bf y})$]; second, {\it the Agmon distance between the turning surfaces $S_E^-$ and $S_E^+$, $\rho_E(S_E^-, S_E^+)$,} as the minimal Agmon distance between arbitrary two points ${\bf x} \in S_E^+$ and ${\bf y} \in S_E^-$ [ $\rho_E(S_E^-, S_E^+) = \inf_{{\bf x} \in S_E^+} \rho_E({\bf x}, S_E^-)$]. 

In a nutshell, and thus a bit abusing the formulation of the original result \cite{Hislop1989a}, we say that for an arbitrary small $\epsilon>0$, there exists a constant $c>0$, such that the tunnelling rate, $\Gamma$, (viz., the width of  a resonance) in the quasiclassical limit ($\hbar\to 0$) obeys 
\begin{eqnarray}\label{HislopSigal_UpperBound}
\Gamma \leqslant c\exp[ - 2\beta_E (\tilde{\rho}_E - \epsilon) ],
\end{eqnarray}
where $0 < \tilde{\rho}_E < \infty$ and $\beta_E\tilde{\rho}_E$ being the leading asymptote of $\rho_E(S_E^-, S_E^+)$ when $\hbar\to 0$. However, the following interpretation of upper bound (\ref{HislopSigal_UpperBound}) is sufficient for our further applications:
\begin{eqnarray}\label{AgmonTunnellingRate}
\Gamma = O\left( e^{-2\rho_E(S_E^-, S_E^+) } \right),
\end{eqnarray}
i.e., twice the Agmon distance between the turning surfaces gives the leading exponential factor of the tunnelling rate within the quasiclassical approximation.

The Agmon distance between two points, $\rho_E({\bf x}, {\bf y})$, can be computed by solving numerically Eq. (\ref{Hamiljacobi_AgmonDistance}) with the boundary condition 
\begin{eqnarray}
\rho_E( {\bf y}, {\bf y}) = 0
\end{eqnarray}
by means of the fast marching method \cite{Kimmel1998, Barth1998a, Sethian1999, Kimmel2004, Hassouna2007}. Moreover, having computed the solution, one can readily extract the minimal geodesic from a given initial point  ${\bf x}$ by back propagating along $\rho_E({\bf x}, {\bf y})$, where ${\bf y}$ is regarded as a fixed parameter; more explicitly, the minimal geodesic, ${\bf g} \equiv {\bf g}(t)$, is obtained as the solution of the following Cauchy problem \cite{Kimmel1998, Kimmel2004}
\begin{eqnarray}\label{MinGeodesic}
\dot{{\bf g}} = -\nabla_{\bm \xi} \rho_E({\bm \xi}, {\bf y}), \qquad {\bf g}(0) = {\bf x}.
\end{eqnarray}
Such a geodesic can be interpreted as a ``tunnelling trajectory.''

A brief remark on types of the solutions of Eq. (\ref{Hamiljacobi_AgmonDistance}) ought to be made. Generally speaking, an eikonal equation admits a local solution under reasonable assumptions, but a global solution is not possible in a general case owing to the possibility of development of caustics (see, e.g., Ref. \cite{Evans1998}). Nonetheless, when we talk about a solution of Eq. (\ref{Hamiljacobi_AgmonDistance}), we actually refer to a viscosity solution because not only it is a global solution, but also it has the meaning of distance \cite{Sethian1999, Kimmel2004} which we originally assigned to the function $\rho_E$. Another reason for employing only the viscosity solution of the eikonal equation is as follows: Writing the wave function as $\Psi({\bf x}) = \exp[ - S(\hbar; {\bf x}) ]$, the time-independent Schr\"{o}dinger equation becomes 
$$
	\left| \nabla_{\bf x} S(\hbar; {\bf x}) \right|^2 - \hbar \Delta S(\hbar; {\bf x}) = 2m( U({\bf x}) - E ).
$$
Comparing this equation with Eq. (\ref{Hamiljacobi_AgmonDistance}) in the classical forbidden region, we conclude that
$$
	\rho_E = \lim_{\hbar\to 0} S(\hbar; {\bf x}),
$$
which is the definition of the viscosity solution of eikonal equation (\ref{Hamiljacobi_AgmonDistance}) (see, e.g., page 540 of Ref. \cite{Evans1998}).

In fact, the fast marching method is an ``upwind'' finite difference method that efficiently computes the viscosity solution of an eikonal equation. Note, hence, that the fast marching method as well as the other ideas presented and developed in the current paper cannot be employed to study the influence of chaotic tunnelling trajectories (see Ref. \cite{Levkov2009} and references therein). Some implementations of the fast marching method as well as the minimal geodesic tracing can be downloaded from Refs. \cite{KroonAccurateFM, PeyreToolboxFM, ChuLSMLIB}.
  
The Agmon distance from the surface to a point, $\rho_E({\bf x}, S_E^-)$, must satisfy Eq. (\ref{Hamiljacobi_AgmonDistance}). Indeed, $\rho_E({\bf x}, S_E^-)$ is the solution of the boundary problem
\begin{eqnarray}\label{AgmonDistanceFromSurf_HJ}
&& |\nabla_{\bf x} \rho_E({\bf x}, S_E^-) |^2 = 2m(U({\bf x}) - E)_+, \\
&& \rho_{E} ({\bf y}, S_E^-) = 0, \qquad \forall {\bf y} \in S_E^-, \nonumber
\end{eqnarray}
which can be solved by the fast marching method as well. Finally, the Agmon distance between the turning surfaces is computed as $\min_{{\bf x} \in S_E^+} \rho_E({\bf x}, S_E^-)$ after solving Eq. (\ref{AgmonDistanceFromSurf_HJ}). 

The points ${\bf b} \in S_E^-$ and ${\bf e} \in S_E^+$ such that
\begin{eqnarray}
\rho_E (S_E^-, S_E^+) = \rho_E ({\bf b}, {\bf e}),
\end{eqnarray}
 are of physical importance because they represent the points where the particle ``begins'' its motion under the barrier (${\bf b}$) and ``emerges'' from the barrier (${\bf e}$), correspondingly. Moreover, the minimal geodesic (\ref{MinGeodesic}) that connects these points (${\bf g}(0) = {\bf b}$ and ${\bf g}(1) = {\bf e}$) is  a tunnelling trajectory which gives the largest tunnelling rates -- the {\it leading tunnelling trajectory}. Note, however, that these points as well as the trajectories may not be unique in a general case.
 
The idea of utilization of upper bounds to describe tunnelling is not new. Kapur and Peierls \cite{Kapur1937, Kapur1937a} (see also Ref. \cite{Peierls1991}) have proposed as early as 1937 that even though many dimensional quasiclassical approximation is untractable in its original formulation, it still can be used to obtain the upper bound on the probability of transmission through a barrier. The geometrical ideas reviewed in the current section can be viewed upon as a reincarnation of the Kapur-Peierls approach with an important (and convenient for our applications) emphasis on the geometrical aspect of the method. 

It is also noteworthy that a power of the fast marching method in applications to tunnelling has already been recognized in chemistry within the context of the reaction path theory \cite{Dey2004, Dey2006a, Dey2006b, Dey2007, Liu2010}. Similarly to the current paper, the main object of interest of those studies is the reaction path, which is the leading tunnelling trajectory in our terminology. Nevertheless, the motivation for the usage of the fast marching method, presented in Refs.  \cite{Dey2004, Dey2006a, Dey2006b, Dey2007, Liu2010}, is tremendously  different from the geometrical point of view adopted here. 

\section{Main Results}\label{Sec3}

In this section, we shall follow a two step program. First, we consider tunnelling in multiple finite range potentials, where we prove that leading tunnelling trajectories are linear (Theorem \ref{theorem1}). Then, we reduce the case of multiple long range potentials to the previous one by employing the fact that a singular long range potential can be represented as a sum of a singular short range potential and a continuous long range tail [Eq. (\ref{PartitionLongRangeSingularPotential})]. Such a reduction allows us to prove that the leading tunnelling trajectories are ``almost'' linear (Theorem \ref{theorem2}). We note that partitioning (\ref{PartitionLongRangeSingularPotential}) was put forth by Perelomov, Popov, and Terent'ev \cite{Perelomov1966a, Perelomov67a, Perelomov1967a, Popov1968}, and it is widely used for obtaining the Coulomb corrections in strong filed ionization  (see Refs. \cite{Popov2004, Popov2005, Smirnova2008a, Bondar2009, Popruzhenko2008a, Popruzhenko2008b, Popruzhenko2009} and references therein).

Let us introduce some notations. Hereinafter, the dimension of the space is assumed to be $n\geqslant 2$.  The interaction of an electron with a static electric field of the strength $F$ is of the form $Fx_n$ ($F>0$). $\partial A$ denotes the boundary of the region $A$. The map, ${\bm \min_{x_n}} : \mathbb{R}^n \supset A \to \mathbb{R}^n$, selects a point ${\bf x} = {\bm \min_{x_n}} A \in A$ that has the smallest $x_n$ component among all the other points from $A$, assuming that $A$ has such a unique point. The projection $P{\bf x}$ of the point ${\bf x} = (x_1, x_2, \ldots, x_n)$ is defined as $P{\bf x} = (x_1, \ldots, x_{n-1}, E/F)$.

\begin{theorem}\label{theorem1}
We study single electron tunnelling ($-\infty< E<0$, $F>0$) in the potential 
\begin{eqnarray}\label{TotalPotentialComplectSupport}
U({\bf x}) = \sum_{j=1}^K V_j(\| {\bf x} - {\bf R}_j \|) + Fx_n.
\end{eqnarray}
Let us assume that 
\begin{enumerate}
\item $V_j : (0, R_j) \to (-\infty, 0)$ and $V_j : (R_j, +\infty) \to \{ 0 \}$, $R_j > 0$, $j=1,\ldots,K$, are differentiable on $(0, R_j)$ and strictly increasing functions, such that $V_j(0) = -\infty$ and $V_j$ may have a jump discontinuity at the point $R_j$. 
\item $\supp V_j  = \left\{ {\bf x} \in  \mathbb{R}^n \, | \, V_j (\| {\bf x} - {\bf R}_j \|) \neq 0 \right\}$ is the support of the potential $V_j (\| {\bf x} - {\bf R}_j \|)$, $\supp V_k \cap \supp V_j = \emptyset$, $\forall k \neq j$ and $\supp V_j \cap \left\{  {\bf x} \in  \mathbb{R}^n \, | \,  x_n \leqslant E/F \right\} = \emptyset$, $j = 1,\ldots,K$.
\item Introduce ${\bf q}_j = {\bm \min_{x_n}} \partial \supp V_j$, ${\bf p}_j =  {\bm \min_{x_n}} S_E^-(j)$, $S_E^-(j)$ is defined in Eq. (\ref{SEminusJ_Deff}). If there exists $N$, such that 
\begin{eqnarray}\label{EuclidianAssumption}
\| {\bf p}_N - P{\bf R}_N \| < \| {\bf q}_j - P{\bf R}_j \|, \quad \forall j \neq N,
\end{eqnarray}
\end{enumerate}
Then, the leading tunnelling trajectory is unique and linear, and it starts at the point ${\bf p}_N$ and ends at $P{\bf R}_N$, $\rho_E( S_E^-, S_E^+) = \rho_E({\bf p}_N,  P{\bf R}_N)$.
\end{theorem}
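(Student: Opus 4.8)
The plan is to evaluate $\rho_E(S_E^-,S_E^+)$ directly from the length functional (\ref{AgmonLengthOfCurve}), exploiting two features of the potential: outside every well the Agmon weight is $\sqrt{2m(Fx_n-E)_+}$, a function of the height $x_n$ alone, while each individual well is spherically symmetric about ${\bf R}_j$. The first step is to identify the turning surfaces. Assumption~2 forces the classically forbidden region $\{U>E\}$ to have the single hyperplane $S_E^+=\{x_n=E/F\}$ as its outer boundary and $S_E^-=\bigcup_{j=1}^{K}S_E^-(j)$ as its inner one, where $S_E^-(j)=\{{\bf x}\in\supp V_j\,:\,V_j(\|{\bf x}-{\bf R}_j\|)+Fx_n=E\}$. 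Using the spherical symmetry one checks that on $S_E^-(j)$ one has $x_n=(E-V_j(\|{\bf x}-{\bf R}_j\|))/F$, so that $S_E^-(j)$ is a closed surface of revolution about the vertical line through ${\bf R}_j$, lies in $\overline{\supp V_j}$, and has its lowest point ${\bf p}_j$ directly below ${\bf R}_j$; in particular ${\bf p}_j$, ${\bf q}_j$, $P{\bf R}_j$ are collinear on that vertical line with $\|{\bf p}_j-P{\bf R}_j\|\geq\|{\bf q}_j-P{\bf R}_j\|$, and as one moves along $S_E^-(j)$ from its ``top'' to ${\bf p}_j$ the radius $\|{\bf x}-{\bf R}_j\|$ increases monotonically.

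Second, I would prove a ``bare-field lemma'': for any path ${\bm\gamma}$ not meeting $\bigcup_j\supp V_j$, the elementary bound $\|\dot{\bm\gamma}\|\geq|\dot\gamma_n|$ gives $L_A({\bm\gamma})\geq|W(\gamma_n(1))-W(\gamma_n(0))|$ with $W(t):=\frac{2\sqrt{2m}}{3F}(Ft-E)^{3/2}$, equality holding only for a monotone vertical segment; hence the Agmon distance of a point ${\bf z}$ (with $z_n\geq E/F$) to $S_E^+$ \emph{along support-avoiding paths} equals $W(z_n)$ and is realized by descending straight down. Since $U\leq Fx_n$ everywhere, the vertical segment from ${\bf p}_N$ to $P{\bf R}_N$ therefore obeys $\rho_E({\bf p}_N,P{\bf R}_N)\leq W(x_n({\bf p}_N))=\frac{2\sqrt{2m}}{3F}(F\|{\bf p}_N-P{\bf R}_N\|)^{3/2}$, with strict inequality unless ${\bf p}_N={\bf q}_N$. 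Combining this with assumption~(\ref{EuclidianAssumption}) disposes of all the other wells: decomposing an arbitrary $S_E^-\to S_E^+$ path at its last point ${\bf z}^\dagger$ on $\bigcup_j\partial\supp V_j$, its length is at least $W(z^\dagger_n)\geq W(x_n({\bf q}_{l^\dagger}))=\frac{2\sqrt{2m}}{3F}(F\|{\bf q}_{l^\dagger}-P{\bf R}_{l^\dagger}\|)^{3/2}$ where ${\bf z}^\dagger\in\partial\supp V_{l^\dagger}$; if $l^\dagger\neq N$ this already exceeds $\rho_E({\bf p}_N,P{\bf R}_N)$, and an analogous finite, recursive decomposition at the remaining support crossings, using the triangle inequality for $W$ and the fact that traversing each well's shell costs a nonnegative length, rules out every path that does not touch $S_E^-(N)$. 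Hence $\rho_E(S_E^-,S_E^+)=\rho_E(S_E^-(N),S_E^+)\leq\rho_E({\bf p}_N,P{\bf R}_N)$, and it remains only to prove the reverse inequality with uniqueness of the minimizer.

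For that I would use the rotational symmetry about the axis through ${\bf R}_N$ to reduce to a planar problem in a meridian half-plane, in which, between $S_E^-(N)$ and $\partial\supp V_N$, the metric is $2m\big(V_N(\|{\bf x}-{\bf R}_N\|)+Fx_n-E\big)\langle\cdot,\cdot\rangle$. The axis itself is a geodesic there (the conformal factor has no transverse gradient on it) and crosses $\partial\supp V_N$ perpendicularly, so it joins smoothly at ${\bf q}_N$ with the vertical bare-field geodesic; thus the segment $[{\bf p}_N,P{\bf R}_N]$ is a geodesic of length $h({\bf q}_N)+W(x_n({\bf q}_N))$, where $h(\cdot):=\rho_E(\cdot,S_E^-(N))$. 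Decomposing an arbitrary $S_E^-(N)\to S_E^+$ path at its exit point ${\bf z}\in\partial\supp V_N$ bounds its length below by $h({\bf z})+W(z_n)$. The whole theorem therefore reduces to the single claim that ${\bf z}\mapsto h({\bf z})+W(z_n)$ attains its minimum over $\partial\supp V_N$ \emph{uniquely} at the lowest point ${\bf q}_N$, and that the geodesic from ${\bf q}_N$ to $S_E^-(N)$ is the axial segment reaching ${\bf p}_N$.

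This last claim is where I expect the real work to lie. The term $W(z_n)$ is minimized at ${\bf q}_N$ trivially, so the content is that the Agmon ``thickness'' $h$ of the shell between $S_E^-(N)$ and $\partial\supp V_N$ is smallest along the downward axis. Geometrically this is forced by the monotonicity noted in the first step — there the inner surface $S_E^-(N)$ is simultaneously radially closest to $\partial\supp V_N$ and at the smallest $x_n$, precisely at ${\bf p}_N$ — but a clean proof seems to require either integrating the characteristics of the eikonal equation (\ref{AgmonDistanceFromSurf_HJ}) in the meridian plane, propagating them out to $\partial\supp V_N$ and verifying that no caustic intervenes (so that $h$ is genuinely realized by them), or an angular-monotonicity comparison showing that $h$ along $\partial\supp V_N$ decreases as one descends toward ${\bf q}_N$. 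Once the minimum is shown to be strict and the minimizing geodesic identified as the axial segment, uniqueness of the leading tunnelling trajectory follows at once, since every competitor — off-axis through well $N$, or through any other well — has already been shown to be strictly longer. The remaining ingredients (the turning surfaces, the bare-field lemma, the elimination of the other wells) are essentially bookkeeping with the definitions and with the height-only structure of the field term.
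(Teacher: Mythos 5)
Your setup (identification of $S_E^+$ and of the surfaces of revolution $S_E^-(j)$, the location of ${\bf p}_j$ and ${\bf q}_j$ on the vertical axis through ${\bf R}_j$, the bare-field lemma, and the decomposition of a path at its last exit point from the supports) is sound and broadly parallels the paper's reduction plus its ``backward propagation'' use of assumption (\ref{EuclidianAssumption}). However, your argument stops exactly at the step the theorem actually requires: the claim that ${\bf z}\mapsto h({\bf z})+W(z_n)$, with $h(\cdot)=\rho_E(\cdot,S_E^-(N))$, attains its minimum over $\partial\supp V_N$ uniquely at ${\bf q}_N$, with the minimizing geodesic being the axial segment down to ${\bf p}_N$. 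You explicitly defer this (``where I expect the real work to lie'') to either a characteristics-and-caustics analysis or an unspecified angular-monotonicity comparison, and carry out neither; noting that the axis is a geodesic is not enough, since one needs it to be the \emph{minimizer} among all shell-crossing paths. As written, the proposal establishes only the upper bound $\rho_E(S_E^-,S_E^+)\leqslant\rho_E({\bf p}_N,P{\bf R}_N)$ and the exclusion of final exits through other wells, not the reverse inequality or uniqueness. The paper closes precisely this gap with an elementary comparison: foliate the shell between $S_E^-(N)$ and $\partial\supp V_N$ by spheres centred at ${\bf R}_N$ [Eq. (\ref{SigmaWDeff})]; on each sphere $V_N$ is constant while $x_n$, hence $U$, is smallest at the sphere's lowest point, and the chord between consecutive spheres is at least the radial increment, so every path is at least as long in the Agmon metric as the vertical segment (\ref{PathGDeff}), with equality only for that segment. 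No caustic analysis is needed; a continuous version is simply $\|\dot{\bm\gamma}\|\geqslant\bigl|\tfrac{d}{dt}\|{\bm\gamma}-{\bf R}_N\|\bigr|$ together with $\gamma_n\geqslant[{\bf R}_N]_n-\|{\bm\gamma}-{\bf R}_N\|$, integrated in the radial variable.

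A secondary but real weakness is your elimination of competitors that start on $S_E^-(j)$, $j\neq N$, but leave the supports for the last time through $\partial\supp V_N$. Your recursion uses only the triangle inequality for $W$ on support-free pieces and \emph{nonnegativity} of the in-support pieces; for a path that enters $\supp V_N$ high on its boundary, descends close to $S_E^-(N)$ (where the Agmon weight is arbitrarily small), and exits near ${\bf q}_N$, this lower bound telescopes to roughly $W([{\bf q}_N]_n)$, which is strictly below the candidate value $\rho_E({\bf p}_N,{\bf q}_N)+W([{\bf q}_N]_n)$, so nonnegativity alone does not exclude it. Such a competitor is in fact longer (it pays for crossing well $j$'s shell and for penetrating well $N$'s shell an extra time), but to see this you again need the quantitative shell-crossing estimate --- any path between $\partial\supp V_N$ and (a neighbourhood of) $S_E^-(N)$ has Agmon length at least $\rho_E({\bf p}_N,{\bf q}_N)$, and crossing well $j$'s shell costs a strictly positive amount --- i.e.\ exactly the comparison lemma you left unproven. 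Once that lemma is supplied, both this step and the uniqueness assertion follow, and your remaining bookkeeping goes through.
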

\begin{proof}
 The boundary of the classically forbidden region is defined by the equation $U({\bf x}) = E$. Consider two cases: 
 
 First, if ${\bf x} \notin \bigcup_{j=1}^K \supp V_j $ then according to assumption 2, the equation $U({\bf x}) = E$ simply reads $Fx_n = E$, and thus its solution defines the outer turning surface $S_E^+ = \left\{  {\bf x} \in  \mathbb{R}^n \, | \,  x_n = E/F \right\}$. One can see now that the projector operator $P$ projects a point onto $S_E^+$. 
 
 Second, if ${\bf x} \in \supp V_j$ and $V_j$ is continuous at the point $R_j$, then the equation reads $V_j (\| {\bf x} - {\bf R}_j \|) + Fx_n = E$. To proof that the set 
 \begin{eqnarray}\label{SEminusJ_Deff}
 S_E^-(j) = \left\{ {\bf x} \in \supp V_j \, | \,  V_j (\| {\bf x} - {\bf R}_j \|) + Fx_n = E \right\}
 \end{eqnarray}
 is not empty, we construct the function $f_j( {\bf x} ) = V_j (\| {\bf x} - {\bf R}_j \|) + Fx_n -E$. Since $f_j({\bf R}_j) = -\infty$, we can find a set $A_j \subset \supp V_j$ located close to ${\bf R}_j$, such that $f_j({\bf x}) < 0$ for all ${\bf x} \in A_j$; correspondingly, since according to assumption 2, $x_n > E/F$, there exists the set $B_j \subset \supp V_j$ of points close to the boundary of $\supp V_j$ for which $f_j$ is positive. In fact, $A_j$ and $B_j$ can be constructed such that $\| {\bf x} - {\bf R}_j \| < \| {\bf y} - {\bf R}_j \|$, $\forall {\bf x} \in A_j$ and $\forall {\bf y} \in B_j$. Therefore, the intermediate value theorem guarantees that  $S_E^-(j) \neq \emptyset$ and it ``lies between'' $A_j$ and $B_j$. Furthermore, the inner turning surface is $S_E^- = \bigcup_{j=1}^K S_E^-(j)$, and $S_E^-(j) \cap S_E^-(k) = \emptyset$, $\forall j\neq k$. (Note that the strict monotonicity of $V_j(x)$ assures that the set $S_E^-(j)$ is connected.) Whence, 
\begin{eqnarray}\label{ReductionManyCenterCaseToOneCenter}
\rho_E( S_E^-, S_E^+) = \min_j \left\{ \rho_E (S_E^-(j), S_E^+) \right\}.
\end{eqnarray}
Eq. (\ref{ReductionManyCenterCaseToOneCenter}) means the reduction of the many centre case  to the singe centre case under the assumptions made. Needles to mention that such a reduction tremendously simplifies the analysis. 

The same conclusions are valid if the jump of the function $V_j$ at $R_j$ is not too large, so that the equation $V_j (\| {\bf x} - {\bf R}_j \|) + Fx_n = E$ has solutions for $ {\bf x} \in \supp V_j$. However, if the jump is too large, i.e., this equation does not have solutions from the support of the potential, then it is natural to set $S_E^-(j) = \partial \supp V_j$.

Consider the single centre case -- single electron tunnelling in the potential $U_j ({\bf x}) = V_j (\| {\bf x} - {\bf R}_j \|) + Fx_n$. We shall show that this potential is axially symmetric. If ${\bf x}= (x_1, \ldots, x_n)$, then we introduce $\Pi{\bf x} \equiv (x_1, \ldots, x_{n-1})$. We can then symbolically write ${\bf x} = (\Pi{\bf x} , x_n)$. Using this new notation, we obtain 
\begin{eqnarray}\label{Rewritten_Potential_Uj}
U_j ({\bf x}) = V_j \left( \sqrt{\| \Pi{\bf x} - \Pi{\bf R}_j\|^2 + \left( x_n - \left[{\bf R}_j\right]_n \right)^2} \right) + F x_n,
\end{eqnarray}
where $\left[ {\bf a} \right]_n$ denotes the $n^{\rm th}$ component of the vector ${\bf a}$. It is readily seen from Eq. (\ref{Rewritten_Potential_Uj}) that the potential $U_j ({\bf x})$ is invariant under transformations that do not change $x_n$ and arbitrary $(n-1)$ dimensional (proper and improper) rotations of the vector $\Pi{\bf x}$ about the point $\Pi{\bf R}_j$. The only invariant subspace of $\mathbb{R}^n$ under such transformations is the line $\{ (\Pi{\bf R}_j, x_n) \, | \, x_n \in \mathbb{R}\}$.

Since both regions $S_E^-(j)$ and $S_E^+$ are shape invariant under the axial symmetry transformations, we may expect that the shortest geodesic connecting these regions ought to be shape invariant as well. 
Thus, one readily concludes that the leading tunnelling trajectory should be linear and should connect the points ${\bf p}_j$ and  $P{\bf R}_j$
\begin{eqnarray}\label{S_minus_j_equality}
\rho_E ({\bf p}_j, P{\bf R}_j ) = \rho_E (S_E^-(j), S_E^+),
\end{eqnarray}
since no other geodesic that connects $S_E^-(j)$ and $S_E^+$ is shape invariant with respect to the axial symmetry transformations. Below we shall present a formal version of this derivation.

Foremost, we demonstrate that the operation ${\bm \min_{x_n}}$ is defined on the set $S_j^-(j)$, viz., that there is a unique point of $S_j^-(j)$ that has the smallest component $x_n$. Employing the method of Lagrange multipliers and taking into account the symmetry of the potential, we construct the function 
\begin{eqnarray}
&& \mathrsfs{L}_1 (x_n, c, \lambda) = x_n \nonumber\\
&& \quad + \lambda \left[ V_j\left( \sqrt{ c^2 + \left(x_n - \left[{\bf R}_j\right]_n\right)^2 }\right)  + Fx_n -E\right]. \quad
\end{eqnarray}
The condition $\partial \mathrsfs{L}_1 /\partial c = 0$ leads to $c=0$. Therefore, 
${\bf p}_j =  {\bm \min_{x_n}} S_E^-(j) = (\Pi{\bf R}_j, y)$, where $y$ being the minimal solution of the equation 
\begin{eqnarray}\label{Eq.ForY}
V_j\left(\left| y - \left[{\bf R}_j\right]_n\right|\right) + Fy = E. 
\end{eqnarray}
Moreover, $P{\bf p}_j \equiv P{\bf R}_j \equiv P{\bf q}_j$.

Eq. (\ref{Eq.ForY}) must have two distinct solutions $y_{1,2}$ ($y_1 < y_2$). $y_1$ ($y_2$) corresponds to the point from $S_E^-(j)$ with the minimum (maximum) $x_n$. Additionally, since $E -F y_1 > E-Fy_2$ $\Rightarrow$ $V_j\left(\left| y_1 - \left[{\bf R}_j\right]_n\right|\right) > V_j\left(\left| y_2 - \left[{\bf R}_j\right]_n\right|\right)$, we obtain 
\begin{eqnarray}\label{InequalitiesForY}
\eta_j \equiv \left| y_1 - \left[{\bf R}_j\right]_n\right| > \left| y_2 - \left[{\bf R}_j\right]_n\right|.
\end{eqnarray}

To find the maximum of the function $\| {\bf x} - {\bf R}_j \|$ on the set $S_E^-(j)$ within the Lagrange multipliers method, we introduce the function
\begin{eqnarray}\label{LagrangeMultipliers2}  
&& \mathrsfs{L}_2 (x_n, c, \lambda) = \sqrt{c^2 + \left(x_n - \left[{\bf R}_j\right]_n\right)^2} \nonumber\\
&& \quad + \lambda \left[ V_j\left( \sqrt{ c^2 + \left(x_n - \left[{\bf R}_j\right]_n\right)^2 }\right)  + Fx_n -E\right]. \quad
\end{eqnarray}
Taking into account inequality (\ref{InequalitiesForY}) and the fact that $\partial \mathrsfs{L}_2 /\partial c = 0$ $\Rightarrow$ $c=0$, we conclude that the maximum of the function $\| {\bf x} - {\bf R}_j \|$ on $S_E^-(j)$ is reached at the point ${\bf p}_j$.

Let $S_j( z )$ denote a sphere of the radius $z$ centred at ${\bf R}_j$, $S_j(z) = \{ {\bf x} \in \mathbb{R}^n \, | \, \| {\bf x} - {\bf R}_j \| = z \}$. Consider a sequence of spheres
$
\left\{ S_j\left(\eta_j + k[R_j-\eta_j]/W \right) \right\}_{k=0}^W, 
$
where $S_j (R_j) = \partial \supp V_j$ and $\eta_j$ was introduced in Eq. (\ref{InequalitiesForY}). Now pick a sequence of points, $\{ {\bm \gamma}( k/W ) \}_{k=0}^W$, such that, ${\bm \gamma}( k/W ) \in  S_j\left(\eta_j + k[R_j-\eta_j]/W \right)$, $k=0,
\ldots,W$. We assume that this sequence is a discretization of some  differentiable path ${\bm \gamma} : [0,1] \to \mathbb{R}^n$. According to Eq. (\ref{AgmonLengthOfCurve}), the sums,
\begin{eqnarray}\label{SigmaWDeff}
\Sigma_W({\bm \gamma}) &=& \sqrt{2m} \sum_{k=0}^W \sqrt{ U_j( {\bm \gamma}(k/W)) - E}\nonumber\\
&& \times \| {\bm \gamma}( [k+1]/W ) - {\bm \gamma}(k/W) \|,
\end{eqnarray}
where we set ${\bm \gamma}( 1+1/W ) \equiv {\bm \gamma}( 1 )$, obeys the property $\lim_{W\to\infty} \Sigma_W ({\bm \gamma}) = L_A({\bm \gamma})$. Introduce a path:
\begin{eqnarray}\label{PathGDeff}
{\bf g}(t) = {\bf p}_j+ t\left[ {\bf q}_j - {\bf p}_j \right].
\end{eqnarray}
Since $\forall k$, ${\bf g}(k/W) \in  S_j\left(\eta_j + k[R_j-\eta_j]/W \right)$, $ \left[{\bm \gamma}(k/W)\right]_n  \geqslant \left[{\bf g}(k/W)\right]_n$ and
$V_j (\| {\bf g}(k/W) - {\bf R}_j \| ) = V_j (\| {\bm \gamma}(k/W) - {\bf R}_j \|) $ $\Rightarrow$ $U_j ({\bm \gamma}(k/W)) \geqslant U_j({\bf g}(k/W))$. Moreover, $ \| {\bm \gamma}( [k+1]/W ) - {\bm \gamma}(k/W) \| \geqslant  \| {\bf g}( [k+1]/W ) - {\bf g}(k/W) \|$. Therefore, 
\begin{eqnarray}
\Sigma_W({\bm \gamma}) \geqslant \Sigma_W({\bf g}) \Rightarrow 
L_A ({\bm \gamma}) \geqslant L_A({\bf g}).
\end{eqnarray}
Since $\Sigma_W({\bm \gamma}) = \Sigma_W({\bf g}) \Leftrightarrow {\bm \gamma}(k/W) = {\bf g}(k/W)$, $k = 0,\ldots,W-1$, $\forall W$, we conclude that path (\ref{PathGDeff}) is indeed the shortest geodesic that connects the regions $S_E^-(j)$ and $\partial\supp V_j$. By the same token, the geodesic connecting $\partial\supp V_j$ and $S_E^+$ must be a straight line that starts at ${\bf q}_j$ and ends at $P{\bf q}_j$ because the potential between these two regions is merely $V({\bf x}) =  Fx_n$.

To finalize the proof, we shall {\it ``backward propagate''} the leading tunnelling trajectory starting from the outer turning surface $S_E^+$. Let $\tilde{\rho}({\bf x}, {\bf y})$ denote the Agmon distance between two points for the potential $V({\bf x}) =  Fx_n$. Then, it is easy to demonstrate that 
\begin{eqnarray}\label{AgmonDIstanceInConstantField}
\tilde{\rho}_E({\bf x}, P{\bf x}) = (2/3)\sqrt{2mF} \| {\bf x} - P{\bf x} \|^{3/2}.
\end{eqnarray}
The plane $T(c) = \{ {\bf x} \in \mathbb{R}^n \, | \, x_n = c \}$ is a surface of constant Agmon distance [Eq. (\ref{AgmonDIstanceInConstantField})], such that $\tilde{\rho}_E(T_{E/F}, S_E^+) = 0$ and $\tilde{\rho}_E(T(c), S_E^+)$ is a strictly increasing function of $c$. Since $\| {\bf q}_N - P{\bf R}_N \| = \| {\bf p}_N - P{\bf R}_N \| - \| {\bf p}_N - {\bf q}_N \| < \| {\bf q}_j - P{\bf R}_j \|$,  $\forall j\neq N$, condition (\ref{EuclidianAssumption}) guarantees that increasing $c$ the plane $T(c)$ will ``hit'' the boundary of $\supp V_N$ at the point ${\bf q}_N$. (Note that $\tilde{\rho}_E\left(T(c), S_E^+\right) \equiv \rho_E\left(T(c), S_E^+\right)$, $E/F < c  < \left[ {\bf q}_N \right]_n$.) Moreover,  the following follows from Eq. (\ref{EuclidianAssumption})
$$
\{ {\bf x} \in \mathbb{R}^n \, | \,\left[ {\bf q}_N \right]_n \leqslant x_n \leqslant \left[ {\bf p}_N \right]_n \}\cap\supp V_j = \emptyset, \quad \forall j \neq N,
$$
which means that $N^{\rm th}$ centre is isolated from all the other. Therefore, the shortest geodesic must connect the point ${\bf q}_N$ to the point ${\bf p}_N$.
\end{proof}

\begin{corollary}\label{corollary1}
Consider a single electron tunnelling in the potential (\ref{TotalPotentialComplectSupport}), such that assumptions 1 and 2 of Theorem \ref{theorem1} are satisfied, then the leading trajectory is linear (but may not be unique).
\end{corollary}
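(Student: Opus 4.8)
The plan is to obtain Corollary~\ref{corollary1} by re-reading the proof of Theorem~\ref{theorem1} and isolating the exact place where assumption~3 is used. That assumption enters only in the final ``backward propagation'' paragraph, and there it serves two distinct purposes: it singles out the particular centre $N$ that realises the minimum in the reduction formula~(\ref{ReductionManyCenterCaseToOneCenter}), and it forces the leading trajectory to be unique. Everything preceding that paragraph uses assumptions~1 and~2 only. So the first step is to record what those two assumptions already buy: assumption~2 gives $S_E^+=\{x_n=E/F\}$, the disjoint decomposition $S_E^-=\bigcup_{j}S_E^-(j)$ into connected pieces, and hence the reduction $\rho_E(S_E^-,S_E^+)=\min_j\rho_E(S_E^-(j),S_E^+)$; and assumptions~1--2 then drive the entire single-centre analysis, namely the axial symmetry of $U_j$, the Lagrange-multiplier computations that locate ${\bf p}_j$ and show $\|{\bf x}-{\bf R}_j\|$ is maximised over $S_E^-(j)$ at ${\bf p}_j$, the sphere-sequence comparison making the segment~(\ref{PathGDeff}) the \emph{unique} shortest geodesic between $S_E^-(j)$ and $\partial\supp V_j$, and formula~(\ref{AgmonDIstanceInConstantField}) identifying the vertical drop from ${\bf q}_j$ to $P{\bf R}_j$ as the shortest geodesic between $\partial\supp V_j$ and $S_E^+$ in the field-only region.

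The second step is to assemble these pieces. Because ${\bf p}_j$, ${\bf q}_j$ and $P{\bf R}_j=P{\bf q}_j$ all lie on the common vertical line $\{(\Pi{\bf R}_j,x_n)\}$, concatenating the two geodesics above produces a single straight segment realising $\rho_E(S_E^-(j),S_E^+)=\rho_E({\bf p}_j,P{\bf R}_j)$, exactly as in~(\ref{S_minus_j_equality}). Substituting into~(\ref{ReductionManyCenterCaseToOneCenter}), any index $j^{\star}$ attaining $\min_j\rho_E({\bf p}_j,P{\bf R}_j)$ yields a leading tunnelling trajectory, namely this straight segment; if precisely one index attains the minimum the trajectory is unique (this is the configuration pinned down by condition~(\ref{EuclidianAssumption}) in Theorem~\ref{theorem1}), and if several indices tie one obtains several leading trajectories --- but every one of them is linear. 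This is the assertion of the Corollary.

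The step I expect to require the most care is the tacit assumption, carried over from the single-centre analysis, that for the minimising index $j^{\star}$ the distances in~(\ref{ReductionManyCenterCaseToOneCenter}) are computed in the \emph{full} potential~(\ref{TotalPotentialComplectSupport}), not merely in $U_{j^{\star}}$: the two agree only if the vertical drop from ${\bf q}_{j^{\star}}$ to $P{\bf R}_{j^{\star}}$ does not pierce another support $\supp V_k$ (along such a segment $U=U_{j^{\star}}$, whereas only $U\le U_{j^{\star}}$ in general). In Theorem~\ref{theorem1} this is precisely what condition~(\ref{EuclidianAssumption}) guarantees --- via the monotonicity of $\tilde{\rho}_E(T(c),S_E^+)$ in $c$, the rising level plane $T(c)$ meets $\supp V_N$ before any other support and $N$ stays isolated up to height $[{\bf p}_N]_n$. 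In the general case I would close this gap by observing that a short-circuiting configuration at the minimising index is self-defeating: if the vertical drop from ${\bf q}_{j^{\star}}$ met $\supp V_k$ then, the supports being disjoint balls, $\supp V_k$ would have to reach below ${\bf q}_{j^{\star}}$, and the backward-propagation picture applied to that lower well would produce a competitor strictly shorter than $\rho_E({\bf p}_{j^{\star}},P{\bf R}_{j^{\star}})$, so $j^{\star}$ could not have been minimising. Hence at every minimising index the straight segment is a genuine geodesic of the full potential and the leading tunnelling trajectory is linear. In short, Corollary~\ref{corollary1} is exactly what the proof of Theorem~\ref{theorem1} yields once its last paragraph is removed and ``$N$'' is replaced by ``any index realising the minimum in~(\ref{ReductionManyCenterCaseToOneCenter}).''
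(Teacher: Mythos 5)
Your proposal is correct and follows essentially the same route as the paper, whose entire proof of Corollary~\ref{corollary1} is the one-line remark that it ``follows from the straightforward generalization of the idea of backward propagation'': you simply make that generalization explicit by dropping assumption~3, keeping the single-centre geodesic analysis, and letting any (possibly several) index realising the minimum in Eq.~(\ref{ReductionManyCenterCaseToOneCenter}) supply a linear leading trajectory, with ties accounting for the loss of uniqueness. Your additional discussion of why a straight drop from the minimising centre cannot be ``short-circuited'' by another support is exactly the content the paper compresses into the backward-propagation remark, so the proposal is, if anything, more detailed than the published argument rather than different from it.
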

\begin{proof} 
This corollary follows from the straightforward generalization of the idea of backward propagation. 
\end{proof}

\begin{theorem}\label{theorem2}
We shall study single electron tunnelling ($-\infty< E<0$, $F>0$) in the potential 
\begin{eqnarray}\label{TotalPotentialLongRangeCase}
U({\bf x}) = \sum_{j=1}^K \mathrsfs{V}_j(\| {\bf x} - {\bf R}_j \|) + Fx_n.
\end{eqnarray}
Assume that
\begin{enumerate}
\item $\mathrsfs{V}_j : (0, +\infty) \to (-\infty, 0)$ are differentiable on $(0, +\infty)$ and strictly increasing functions, such that $\mathrsfs{V}_j(0)=-\infty$ and $\mathrsfs{V}_j(+\infty)=0$.
\item The boundary of the classically forbidden region consists of two disjoint pieces -- the internal turning surface $S_E^-$ and  the outer one $S_E^+$. Furthermore, $S_E^- = \bigcup_{j=1}^K S_E^-(j)$, $S_E^-(j) \cap S_E^-(k) = \emptyset$, $\forall j\neq k$, where each $S_E^-(j)$ encircles ${\bf R}_j$\footnote{
The verb ``encircle'' should be understood in the following sense: A piece of the inner turning surface, $S_E^-(j)=\partial CA(j)$, is a boundary of the classically allowed region, $CA(j)$, associated with centre $j$, such that ${\bf R}_j \in CA(j)$.
}. 
\item $B(j) \cap B(k) = \emptyset$, $\forall j\neq k$, and $B(j) \cap S_E^+ = \emptyset$, $\forall j$, where $B(j) = \left\{ {\bf x} \in  \mathbb{R}^n \, |\,  \| {\bf x} - {\bf R}_j \| \leqslant r_j \right\}$ being the ball of radius $r_j$ centered at ${\bf R}_j$. Here $r_j = \max\left\{ \| {\bf x} - {\bf R}_j \| \, | \, {\bf x} \in S_E^-(j) \right\}$ is the ``radius'' of $S_E^-(j)$\footnote{
The parameter $r_j$ can be calculated by means of the method of Lagrange multipliers as it was shown in the proof of Theorem \ref{theorem1} [see Eq. (\ref{LagrangeMultipliers2})].
}.
\end{enumerate}
Then, the leading tunnelling trajectory (may not be unique) is linear up to a term of $O(\lambda)$ as $\lambda\to 0$, where $\lambda = \max_j \left\{ |\mathrsfs{V}_j(\Delta_j) |\right\}$ and 
\begin{eqnarray}\label{DeltaDeff}
\Delta_j = \min\left( \frac{r_j}2 + \frac 12 \min_{k, \, k \neq j} \left\{ \| {\bf R}_j - {\bf R}_k \| - r_k \right\}, d_j \right).
\end{eqnarray}
Here, $d_j = \min\left\{ \| {\bf x} - {\bf R}_j \| \, | \, {\bf x} \in S_E^+ \right\}$ is the Euclidean distance from ${\bf R}_j$ to $S_E^+$.
\end{theorem}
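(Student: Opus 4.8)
The plan is to reduce Theorem~\ref{theorem2} to Theorem~\ref{theorem1} by a Perelomov--Popov--Terent'ev-type splitting of each singular long-range potential, and then to absorb the remainder into an $O(\lambda)$ perturbation of the Agmon metric. For each centre $j$, with the cut-off radius $\Delta_j$ of Eq.~(\ref{DeltaDeff}), write
\begin{eqnarray}\label{PartitionLongRangeSingularPotential}
\mathrsfs{V}_j(r) = V_j(r) + W_j(r),
\end{eqnarray}
where $V_j(r) := \mathrsfs{V}_j(r) - \mathrsfs{V}_j(\Delta_j)$ for $0 < r \leqslant \Delta_j$ and $V_j(r) := 0$ for $r > \Delta_j$, so that $W_j(r) = \mathrsfs{V}_j(\min\{r,\Delta_j\})$ is continuous on $(0,\infty)$ with $\sup_r|W_j(r)| = |\mathrsfs{V}_j(\Delta_j)| \leqslant \lambda$ (monotonicity of $\mathrsfs{V}_j$), while $V_j$ is a short-range singular potential of exactly the type admitted by Theorem~\ref{theorem1}: continuous (because $\mathrsfs{V}_j$ is), differentiable and strictly increasing on $(0,\Delta_j)$, with $V_j(0) = -\infty$ and $V_j \equiv 0$ on $[\Delta_j,\infty)$. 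The first term of the minimum in Eq.~(\ref{DeltaDeff}) is precisely what makes the supports $\supp V_j$, each contained in the ball $B({\bf R}_j,\Delta_j)$, pairwise disjoint (adding the inequalities for $j$ and $k$ gives $\Delta_j + \Delta_k \leqslant \|{\bf R}_j - {\bf R}_k\|$), the second term ($\Delta_j \leqslant d_j$) keeps them away from $S_E^+$, and assumptions~2--3 give $\Delta_j > r_j$, so $S_E^-(j) \subset B({\bf R}_j,\Delta_j)$ and (in the small-$\lambda$ regime) $\supp V_j \cap \{x_n \leqslant E/F\} = \emptyset$. Hence $U_{\rm sr}({\bf x}) = \sum_j V_j(\|{\bf x} - {\bf R}_j\|) + Fx_n$ meets assumptions~1--2 of Theorem~\ref{theorem1}, and $U = U_{\rm sr} + W$ with $W = \sum_j W_j$, $\|W\|_\infty = O(\lambda)$.

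Next I would apply Corollary~\ref{corollary1} to $U_{\rm sr}$ --- or Theorem~\ref{theorem1} itself, with uniqueness, when the analogue of condition~(\ref{EuclidianAssumption}) holds for the turning surfaces of $U_{\rm sr}$: its leading tunnelling trajectory ${\bf g}_0$ is a straight radial segment joining a point of the inner turning surface of $U_{\rm sr}$ to its projection on $S_E^+$. Since the equations $V_j + Fx_n = E$ and $\mathrsfs{V}_j + Fx_n = E$ differ only by $W_j = O(\lambda)$ while $\partial_r \mathrsfs{V}_j$ is bounded below near $S_E^-(j)$ by strict monotonicity, the inner turning surface of $U_{\rm sr}$ is $O(\lambda)$-close in Hausdorff distance to $S_E^-$; the outer surface $S_E^+$ is common to $U$ and $U_{\rm sr}$.

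It then remains --- and this is the heart of the matter --- to show that the leading tunnelling trajectory of $U$ stays within $O(\lambda)$ of ${\bf g}_0$, which I would do region by region. In the bulk of the forbidden region, bounded away from the turning surfaces (so $U_{\rm sr} - E \geqslant c_0 > 0$), the conformally-flat Agmon densities $\sqrt{2m(U-E)_+}$ and $\sqrt{2m(U_{\rm sr}-E)_+}$ differ uniformly by $O(\lambda)$, since $|(U-E)_+ - (U_{\rm sr}-E)_+| \leqslant \|W\|_\infty = O(\lambda)$, so Agmon lengths of fixed paths and Agmon distances between points change by $O(\lambda)$. Near each $S_E^-(j) \subset B({\bf R}_j,\Delta_j)$ the perturbation $W$ is, to leading order, the constant $\mathrsfs{V}_j(\Delta_j)$ (the other centres' tails add a further $O(\lambda)$), so passing from $U_{\rm sr}$ to $U$ there essentially translates the level surfaces of $\rho_E$ radially by $O(\lambda)$ without disturbing the radial-geodesic structure of Theorem~\ref{theorem1}. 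Near $S_E^+$ the potentials equal $Fx_n$ plus the tails, both of which --- and whose gradients --- are $O(\lambda)$, so the outer turning surface and the constant-field geodesics of Eq.~(\ref{AgmonDIstanceInConstantField}) are perturbed by $O(\lambda)$. Altogether the Agmon-length functional for $U$ on paths joining $S_E^-$ to $S_E^+$ is a $C^0$ perturbation of size $O(\lambda)$ of that for $U_{\rm sr}$, whose minimizers are the straight segments of Corollary~\ref{corollary1}.

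The step I expect to be hardest is upgrading ``the functional is $O(\lambda)$-close'' into ``the minimizing geodesic is $O(\lambda)$-close to ${\bf g}_0$'', because the Agmon metric is uniformly elliptic (genuinely Riemannian) only in the interior of the forbidden region and degenerates like the square root of the distance to $S_E^{\pm}$, so textbook geodesic-stability theory does not apply across the whole domain; it must be patched with the collar estimates above and the $O(\lambda)$ displacement of the endpoints. In the elliptic interior one has geodesic convexity, hence a nondegenerate second variation along the radial segment ${\bf g}_0$ for $U_{\rm sr}$, and then either an implicit-function / $\Gamma$-convergence argument, or the elementary comparison that any competitor deviating from ${\bf g}_0$ by more than $C\lambda$ in the sup norm has strictly larger $U$-Agmon length, pins the $U$-minimizer to within $O(\lambda)$ of ${\bf g}_0$ --- i.e., it is linear up to $O(\lambda)$ as $\lambda \to 0$. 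The possible non-uniqueness is inherited, exactly as in Corollary~\ref{corollary1}, from non-uniqueness of the closest pair $({\bf b},{\bf e}) \in S_E^- \times S_E^+$. The Coulomb-type singularities at the ${\bf R}_j$ never obstruct the argument: each ${\bf R}_j$ lies strictly inside the classically allowed region $CA(j)$, where the Agmon metric vanishes, and is screened from the barrier by $S_E^-(j)$, so it enters only through the already-controlled $O(\lambda)$ motion of $S_E^-$.
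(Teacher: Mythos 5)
Your proposal is correct and follows essentially the same route as the paper: the same splitting of each $\mathrsfs{V}_j$ at the radius $\Delta_j$ of Eq.~(\ref{DeltaDeff}), the same verification that $\Delta_j+\Delta_k\leqslant\|{\bf R}_j-{\bf R}_k\|$, $\Delta_j\leqslant d_j$ and $\Delta_j>r_j$ put the short-range pieces under the hypotheses of Theorem~\ref{theorem1}/Corollary~\ref{corollary1}, and the same absorption of the tails as an $O(\lambda)$ perturbation of the Agmon density. The paper's own proof is in fact briefer than yours: it keeps the plainly truncated (jump-discontinuous) $V_{sh}^{(j)}$ rather than your constant-shifted continuous version, and it stops after expanding $\sqrt{U-E}=\sqrt{U_{sh}-E}+O(\lambda)$ and invoking Corollary~\ref{corollary1}, so the geodesic-stability and turning-surface degeneracy issues you carefully flag are not addressed there at all.
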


\begin{proof}
We introduce two auxiliary functions
\begin{eqnarray}
V_{sh}^{(j)}(x) &=& \left\{ 
	\begin{array}{lll}
	\mathrsfs{V}_j(x) & : & 0 \leqslant x < \Delta_j, \\
	0 & : & x \geqslant \Delta_j,
	\end{array} 
\right. \nonumber\\
V_{lg}^{(j)}(x) &=& \left\{ 
	\begin{array}{lll}
	0 & : & 0 \leqslant x < \Delta_j, \\
	\mathrsfs{V}_j(x)  & : & x \geqslant \Delta_j.
	\end{array} 
\right. \nonumber
\end{eqnarray}
One evidently notices that
\begin{eqnarray}\label{PartitionLongRangeSingularPotential}
\mathrsfs{V}_j(x) = V_{lg}^{(j)}(x) + V_{sh}^{(j)}(x),
\end{eqnarray}
where $V_{sh}^{(j)}(x)$ is a singular short range potential and $V_{lg}^{(j)}(x)$ being a long range tail. The purpose of such a partition is to make $V_{sh}^{(j)}(x)$ satisfy assumption 1 of Theorem \ref{theorem1} and produce $V_{lg}^{(j)}(x)$ that obeys the following upper bound:
$$ %\begin{eqnarray}
|V_{lg}^{(j)}(x)| \leqslant  |\mathrsfs{V}_j(\Delta_j) | \leqslant \lambda, \qquad \forall x.
$$ %\end{eqnarray}

We analyze the length of a curve in the Agmon metric [Eq. (\ref{AgmonLengthOfCurve})]. Since 
\begin{eqnarray}
\sqrt{ U({\bf x}) - E } = \sqrt{ \sum_{j=1}^K V_{sh}^{(j)}( \| {\bf x} - {\bf R}_j \| ) + Fx_n -E + O(\lambda)} \nonumber\\
= \sqrt{ \sum_{j=1}^K V_{sh}^{(j)}( \| {\bf x} - {\bf R}_j \| ) + Fx_n -E} + O(\lambda), \nonumber
\end{eqnarray}
under the assumption that $\lambda\to 0$, we have reduced the initial situation to the case of single electron tunnelling in the potential 
\begin{eqnarray}
U_{sh}({\bf x}) = \sum_{j=1}^K V_{sh}^{(j)}(\| {\bf x} - {\bf R}_j \|) + Fx_n.
\end{eqnarray}

Let us now utilize assumption 3 to show that
\begin{eqnarray}\label{DeltaJBiggerRJ}
\Delta_j > r_j.
\end{eqnarray}
Indeed, on the one hand, $B(j)\cap S_E^+ = \emptyset$ $\Rightarrow$ $d_j > r_j$; on the other hand, $B(j) \cap B(k) = \emptyset$, $\forall j\neq k$, $\Rightarrow$ $\| {\bf R}_j - {\bf R}_k \| - r_k > r_j$. 

Furthermore, we shall demonstrate that the definition of  $\Delta_j$ [Eq. (\ref{DeltaDeff})] assures that assumption 2 of  Theorem \ref{theorem1} for the functions $V_{sh}^{(j)}(x)$ holds. According to Eq. (\ref{DeltaDeff}), 
$$
\Delta_j  \leqslant \left(  \| {\bf R}_j - {\bf R}_k \| - r_k + r_j \right)/2, \qquad 
j \neq k;
$$
hence, $\Delta_j + \Delta_k \leqslant \| {\bf R}_j - {\bf R}_k \|$ $\Rightarrow$ $\supp V_{sh}^{(j)} \cap \supp V_{sh}^{(k)} = \emptyset$. From Eq. (\ref{DeltaDeff}), we also obtain $\Delta_j \leqslant d_j$ $\Rightarrow$ $\supp V_{sh}^{(j)} \cap S_E^+ = \emptyset$; thus, the outer turning surface for the potential $U_{sh}({\bf x})$ should be $\left\{  {\bf x} \in  \mathbb{R}^n \, | \,  x_n = E/F \right\}$.

Finally, we have proven the theorem because the potential $U_{sh}({\bf x})$ satisfies all the assumptions of Corollary \ref{corollary1}. 
\end{proof}

Physical clarifications of Theorems \ref{theorem1} and \ref{theorem2} are due. Assumption 1 of Theorem \ref{theorem1} physically implies that $V_j$ are attractive, singular, spherically symmetric short range potentials.  Assumption 2 of the same theorem requires that the potentials do not merge, i.e., their ranges do not overlap. This condition connotes that the classically allowed regions associated with the centres ${\bf R}_j$ [their boundaries are $S_E^-(j)$] do not overlap as well. The latter statement is proven in Theorem \ref{theorem1}. The statement of Corollary \ref{corollary1} can be rephrased as follows: leading tunnelling trajectories for a system of non-overlapping, attractive, singular, short range potentials are linear. However, if the additional condition (\ref{EuclidianAssumption}) is satisfied then Theorem \ref{theorem1} not only guarantees the uniqueness of the leading tunnelling trajectory, but also provides the initial and final points of the trajectory. Assumption 1 of Theorem \ref{theorem2}  means that $\mathrsfs{V}_j$ are attractive, singular, spherically symmetric long range  potentials that vanish at infinity.  Assumptions 2 and 3 of Theorem \ref{theorem2} require the same non-overlapping condition for the classically allowed internal regions mentioned above. Physically, Theorem \ref{theorem2} says that leading tunnelling trajectories for a system of several such potentials are ``almost'' linear, and a deviation from being strictly linear is caused by vanishing long tails of the potentials; thus, the larger the distance between the centres, the smaller the deviation. 

In a nutshell, all these results have been achieved because the multi centre (i.e., molecular) potential is represented as a sum of spherically symmetric potentials, and such conclusions regarding the shape of the trajectories in the single centre (i.e., atomic) case are quite expectable owing to the axial symmetry. 

An important case not covered by the theorems is the case of overlapping potentials that physically corresponds to valence electrons, which form chemical bonds and have a low ionization potential and are delocalized over a molecule. This case as well as the issue of uniqueness of the trajectories will be scrutinized in Sec. \ref{Sec5}.

\section{The Application of Spherically Symmetric Potential Wells to Single Electron Molecular Tunnelling}\label{Sec4}

The simplest type of model molecular potentials that allows for full analytical treatment is of type (\ref{TotalPotentialComplectSupport}) where
\begin{eqnarray}\label{PotentialWell}
V_j(x) = \left\{
	\begin{array}{ccc}
		-\infty & : & 0 < x < r_j, \\
		0 & : & x > r_j.
	\end{array}
\right.
\end{eqnarray}
It is assumed that $S_E^-(j) = \partial \supp V_j = \{ {\bf x} \in  \mathbb{R}^n \, | \, \| {\bf x} \| = r_j \}$.  (Strictly speaking, these potentials are not governed by Theorem \ref{theorem1}.) Evidently, the leading tunnelling trajectories are linear, and moreover, the following equality is valid 
\begin{eqnarray}\label{AgmonDistanceForWells}
\rho_E ( S_E^-, S_E^+) = \min_j \left\{ \tilde{\rho}_E \left({\bf q}_j, P{\bf q}_j \right) \right\},
\end{eqnarray}
where ${\bf q}_j = {\bm \min}_{x_n} S_E^-(j)$ and $\tilde{\rho}_E$ was defined in Eq. (\ref{AgmonDIstanceInConstantField}). Let us estimate the tunnelling rates within Eq. (\ref{AgmonTunnellingRate}) for the two dimensional system of two equivalent centres of type (\ref{PotentialWell}) (see Fig. \ref{FigTwoCentres}).  A straightforward geometrical derivation, using Eqs. (\ref{AgmonTunnellingRate}), (\ref{AgmonDIstanceInConstantField}), and (\ref{AgmonDistanceForWells}), shows that
\begin{eqnarray}\label{TwoEqualPotentialWellsTunnelRates}
\Gamma \propto \exp\left\{ -\frac 2{3F} \left[ FR(1-| \cos\theta |) - 2E \right] ^{3/2} \right\},
\end{eqnarray} 
where $R$ is the distance between the potential wells (i.e., the bond length of a model molecule) and $\theta$ is the angle between the field and the molecular axis. The obtained angular dependent rates are plotted in Fig. \ref{FigRatesTwoCentres}. 

According to Eq. (\ref{AgmonTunnellingRate}), rates obtained within the geometrical approach do not account for an initial molecular orbital. This technique provides solely the contribution  from the shape of the barrier, hence, the name -- the ``geometrical approach.'' An advantage of such a method is that it reduces the calculation of tunnelling rates to a rather simple geometrical exercise.
 
\begin{figure}
\begin{center}
\includegraphics[scale=0.30]{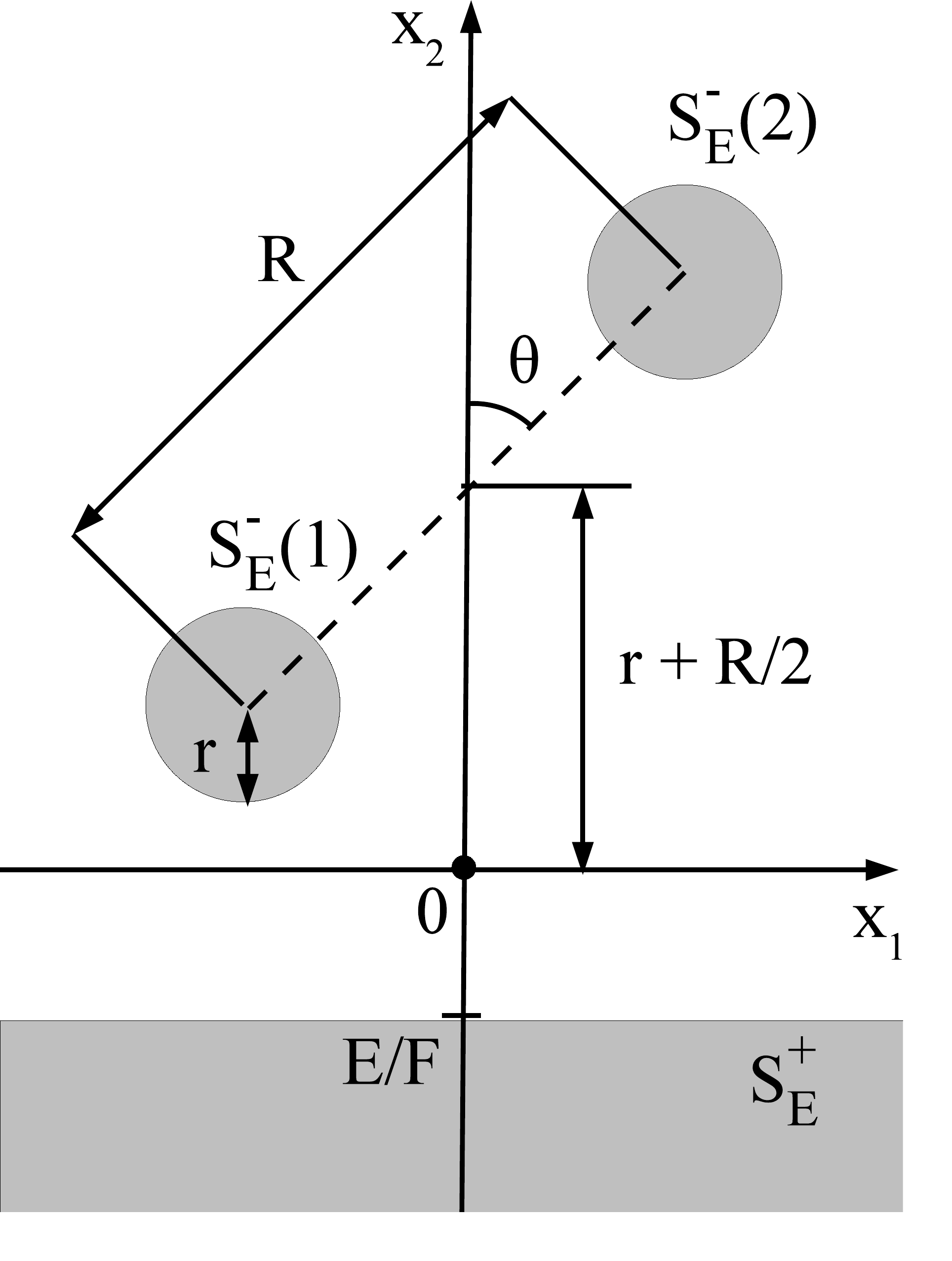}
\caption{The geometry of a two centre model employed to obtain Eq. (\ref{TwoEqualPotentialWellsTunnelRates}). Grey colour denotes the classically allowed regions.}\label{FigTwoCentres}
\end{center}
\end{figure}

\begin{figure}
\begin{center}
\includegraphics[scale=0.50]{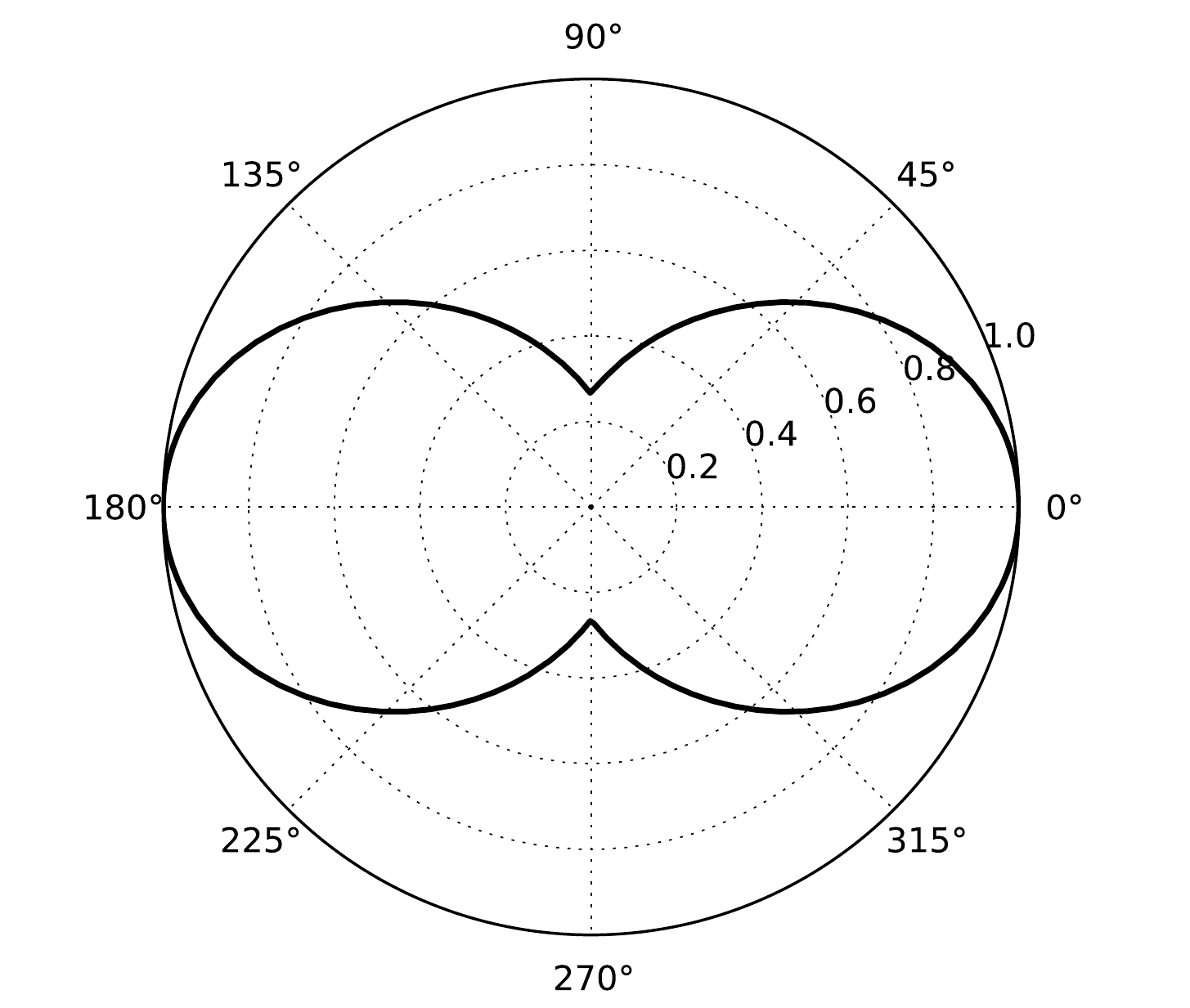}
\caption{The polar plot of the normalized tunnelling rates for the two centre model [Eq. (\ref{TwoEqualPotentialWellsTunnelRates})] as a function of the angle $\theta$. Chosen values of the parameters are $F=0.01$ (a.u.), $R=2$ (a.u.), and $E=-0.5$ (a.u.).}\label{FigRatesTwoCentres}
\end{center}
\end{figure}

\section{Numerical Illustrations}\label{Sec5}

In order to illustrate the results of Theorem \ref{theorem2} and also draw some conclusions beyond Theorem \ref{theorem2}, we shall calculate the shapes of leading tunnelling trajectories for different situations within the numerical scheme sketched in Sec. \ref{Sec2}. To achieve a good accuracy, we compute the viscosity solutions of eikonal equation (\ref{Hamiljacobi_AgmonDistance}) by means of the second order multi-stencil fast marching method \cite{Hassouna2007}. The following two-dimensional model potential is used for a diatomic molecule in this section:
\begin{eqnarray}
	U (x_1, x_2) &=& -\sum_{j=1}^2 Z_j \left[ \left(x_1 - \xi^{(j)}_1\right)^2 + \left(x_2 - \xi^{(j)}_2 \right)^2 \right]^{-1/2} \nonumber\\
					&& + F x_2, \label{LongRangeDiatomic} 
\end{eqnarray}
where the first atom is centred at $(\xi_1^{(1)}, \xi_2^{(1)})$ and the second atom is at $(\xi_1^{(2)}, \xi_2^{(2)})$,
\begin{eqnarray}
	\xi_1^{(1)} = - (R/2) \sin\theta, \qquad \xi_2^{(1)} = 4 - (R/2) \cos\theta, \nonumber\\
	\xi_1^{(2)} =  (R/2) \sin\theta, \qquad \xi_2^{(2)} = 4 + (R/2) \cos\theta, \nonumber\\	
	F=0.05 \mbox{ (a.u.)}, \quad E = -0.5 \mbox{ (a.u.)}, \quad m=1 \mbox{ (a.u.)}. 
\end{eqnarray}
Regarding the definitions of the interatomic distance $R$ and the angle $\theta$ between the molecular axis and the external field $F$, see Fig. \ref{FigTwoCentres}.

Solutions of the eikonal equation and leading tunnelling trajectories for identical non-overlapping long range potentials (\ref{LongRangeDiatomic}) are pictured in Fig. \ref{FigNonOverlapingLongRange}. Figure \ref{FigNonOverlapingLongRange}(a) illustrates the non-uniqueness of leading tunnelling trajectories in the case of non-overlapping potentials. The reason of this non-uniqueness is the mirror symmetry of the potential with respect to the axis $x_1 = 0$, as a result, the probabilities of tunnelling along each trajectory coincide. Note that the leading tunnelling trajectories in Fig. \ref{FigNonOverlapingLongRange}(a) are almost linear. Results shown in Figs. \ref{FigNonOverlapingLongRange}(b) and  \ref{FigNonOverlapingLongRange}(c) are in full agreement with the statement of Theorem \ref{theorem2}.

Leading tunnelling trajectories for identical overlapping long range potentials are shown in Fig. \ref{FigOverlapingLongRange}. The shapes of all these trajectories are almost linear as well. Non-uniqueness of leading tunnelling trajectories in the case of overlapping potentials is demonstrated in Fig. \ref{FigOverlapingLongRange}(a). However, if the interatomic distance is further decreased, one observes in Fig. \ref{FigOverlapingLongRange}(d) that the previous two distinct trajectories merge into one 
restoring the uniqueness of the leading tunnelling trajectory.

The shapes of leading tunnelling trajectories in the case of non-identical long range potentials are shown in Fig. \ref{NonIdenticalLongRange}. The interatomic distance is chosen such that one observes a transition between non-overlapping and overlapping cases by simply rotating the molecule. We clearly see in Fig. \ref{NonIdenticalLongRange} that leading tunnelling trajectories are almost linear. 

The results presented in all these figures can be summed up in the following {\it rule of thumb on how to find the leading tunnelling trajectory:} The final point (or multiple points when more than one leading tunnelling trajectory is possible) is near the point on the outside turning curve (surface in the $n \geqslant 3$ dimensional case), $S_E^+$, with the smallest value of $x_2$ ($x_n$ in the $n$ dimensional case). The initial point is usually near the point on the inside turning curve, $S_E^-$, with the smallest value of $x_2$. The tangent vector to the leading tunnelling trajectory at the initial and final points tend to be perpendicular to the inside and outside turning curves, respectively. 

Note that the rule of thumb is valid for polyatomic molecules and for an arbitrary number of dimensions. 

\begin{figure}
\begin{center}
	\includegraphics[scale=0.45]{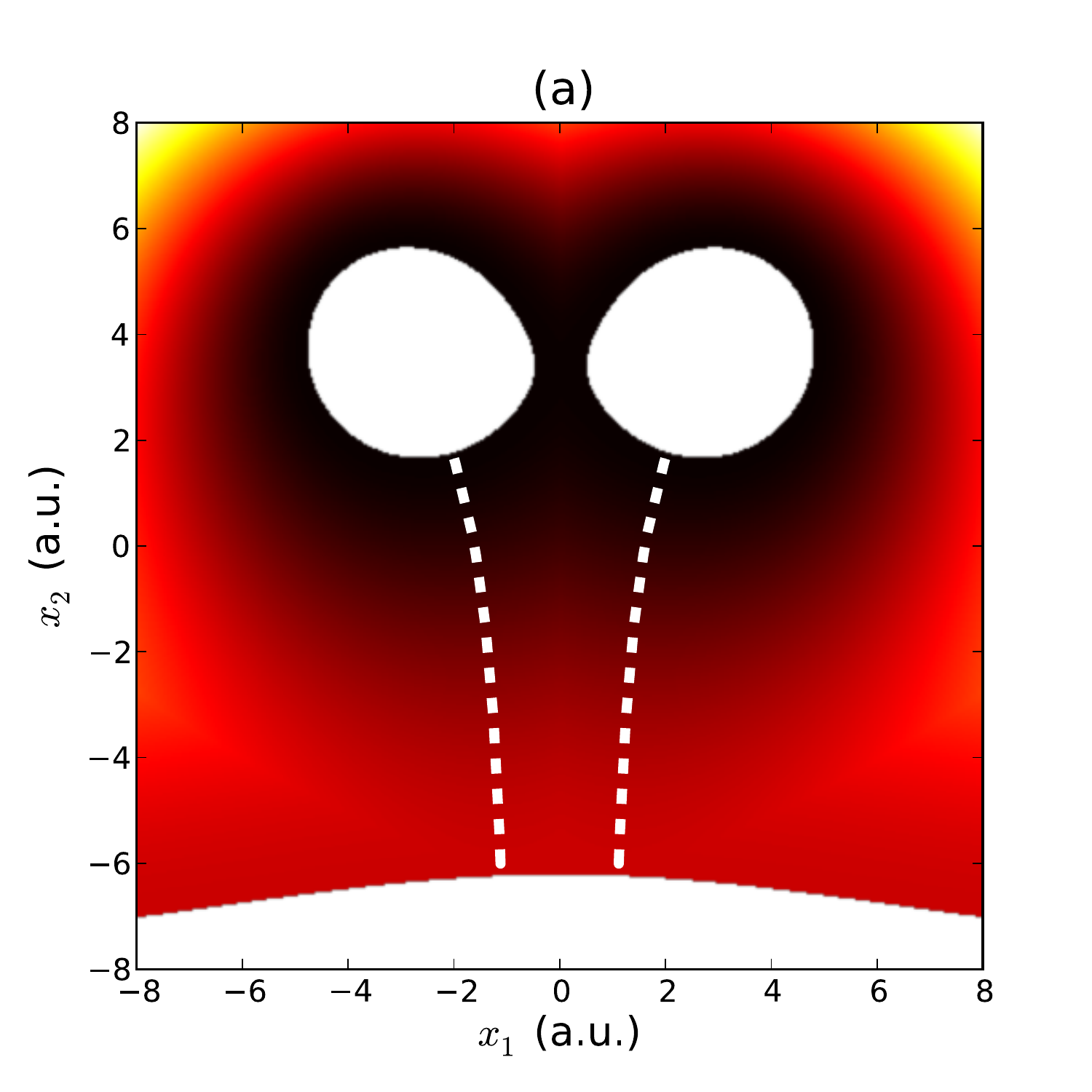}
	\includegraphics[scale=0.45]{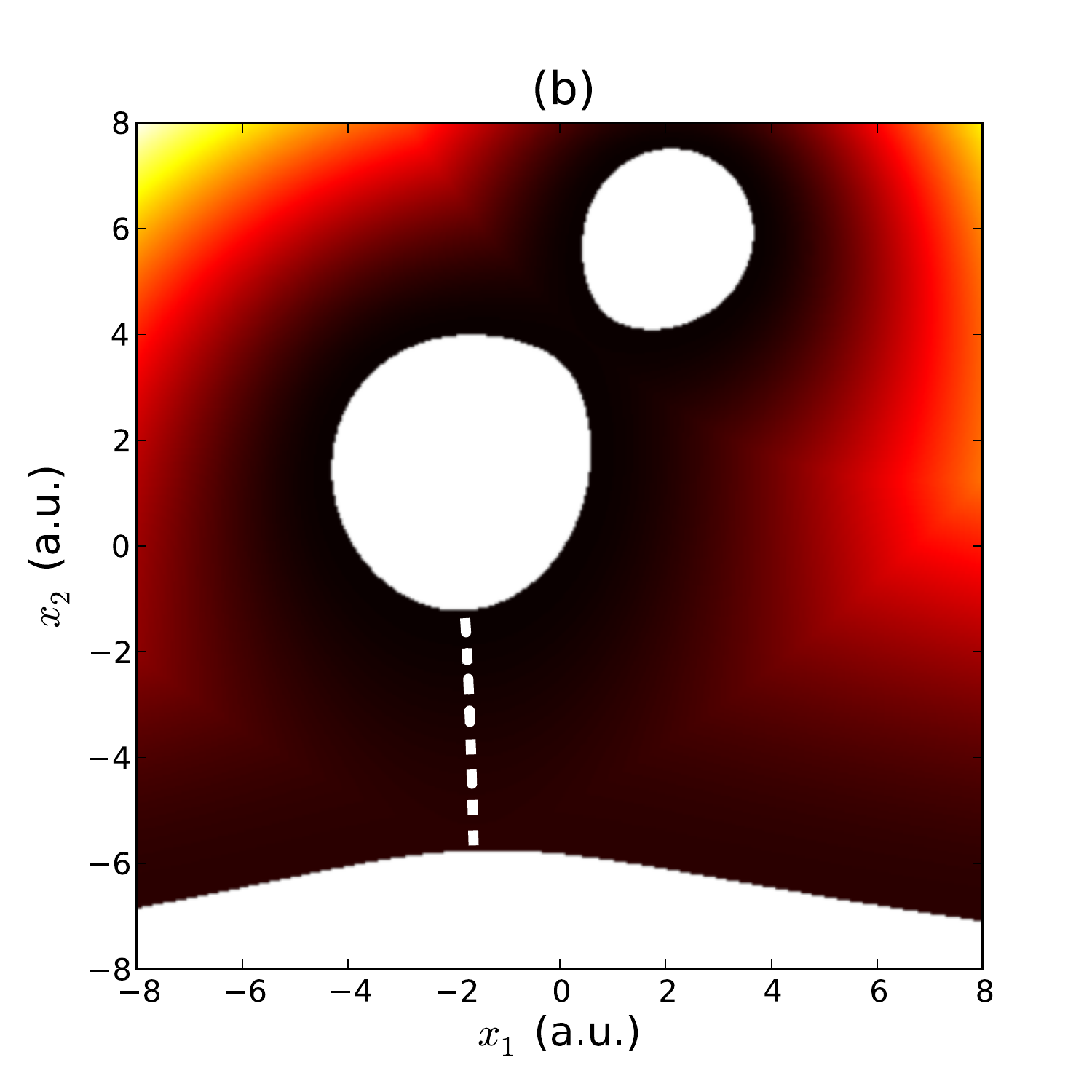}
	\includegraphics[scale=0.45]{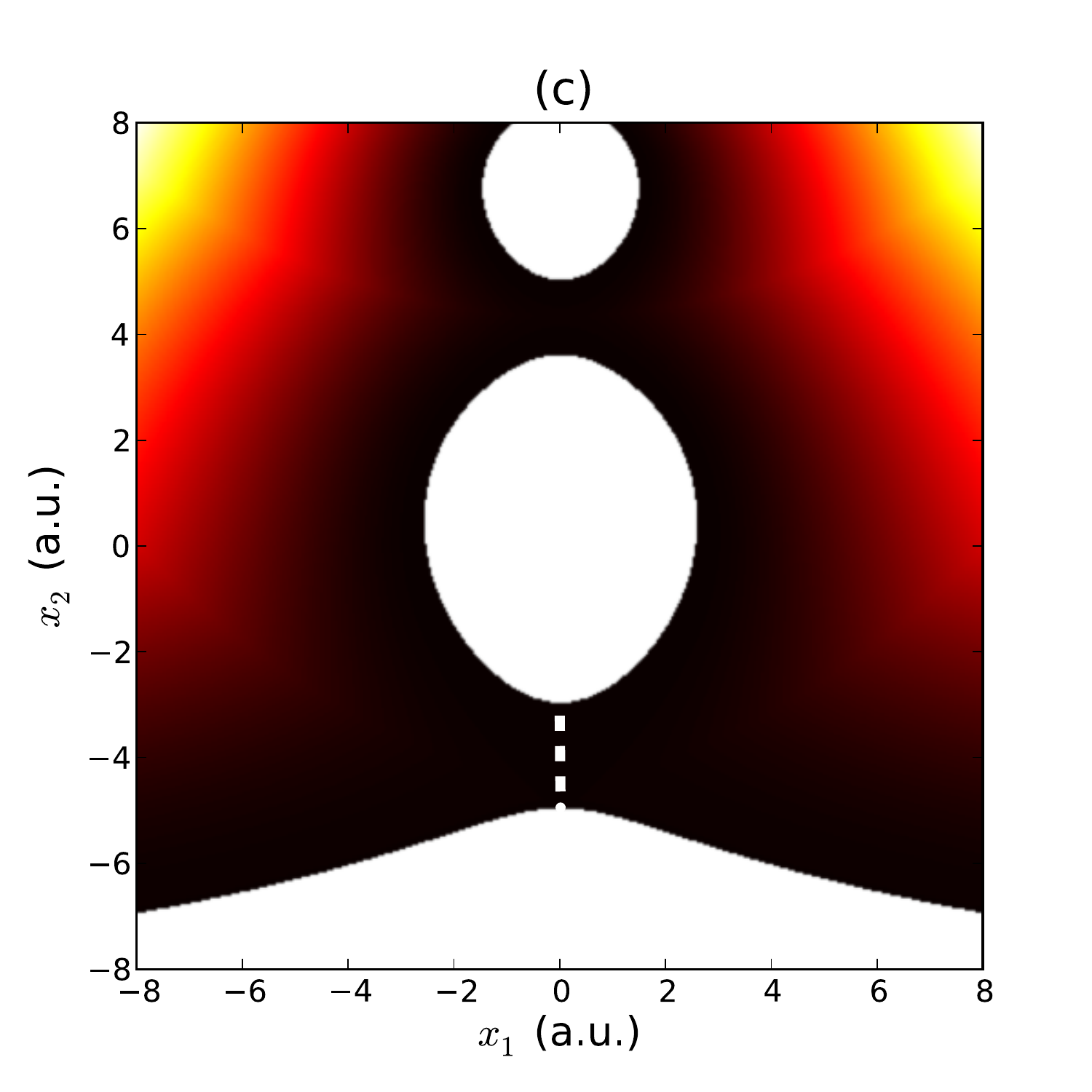}
\caption{(Color online) Viscosity solutions of eikonal equation (\ref{Hamiljacobi_AgmonDistance}) and leading tunnelling trajectories in the case of two identical non-overlapping long range potentials, which are given by Eq. (\ref{LongRangeDiatomic}) with $Z_1 = Z_2 = 1$ and $R = 6$ a.u. White colour denotes the classically allowed regions. Dashed white lines are leading tunnelling trajectories. The solutions of the eikonal equation is represented by linear scale colour ramps from black (minimum) to bright color (maximum). (a) $\theta = 90^{\circ}$; (b) $\theta = 45^{\circ}$; (c) $\theta = 0^{\circ}$.}\label{FigNonOverlapingLongRange}
\end{center}
\end{figure}

\begin{figure*}
\begin{center}
	\includegraphics[scale=0.45]{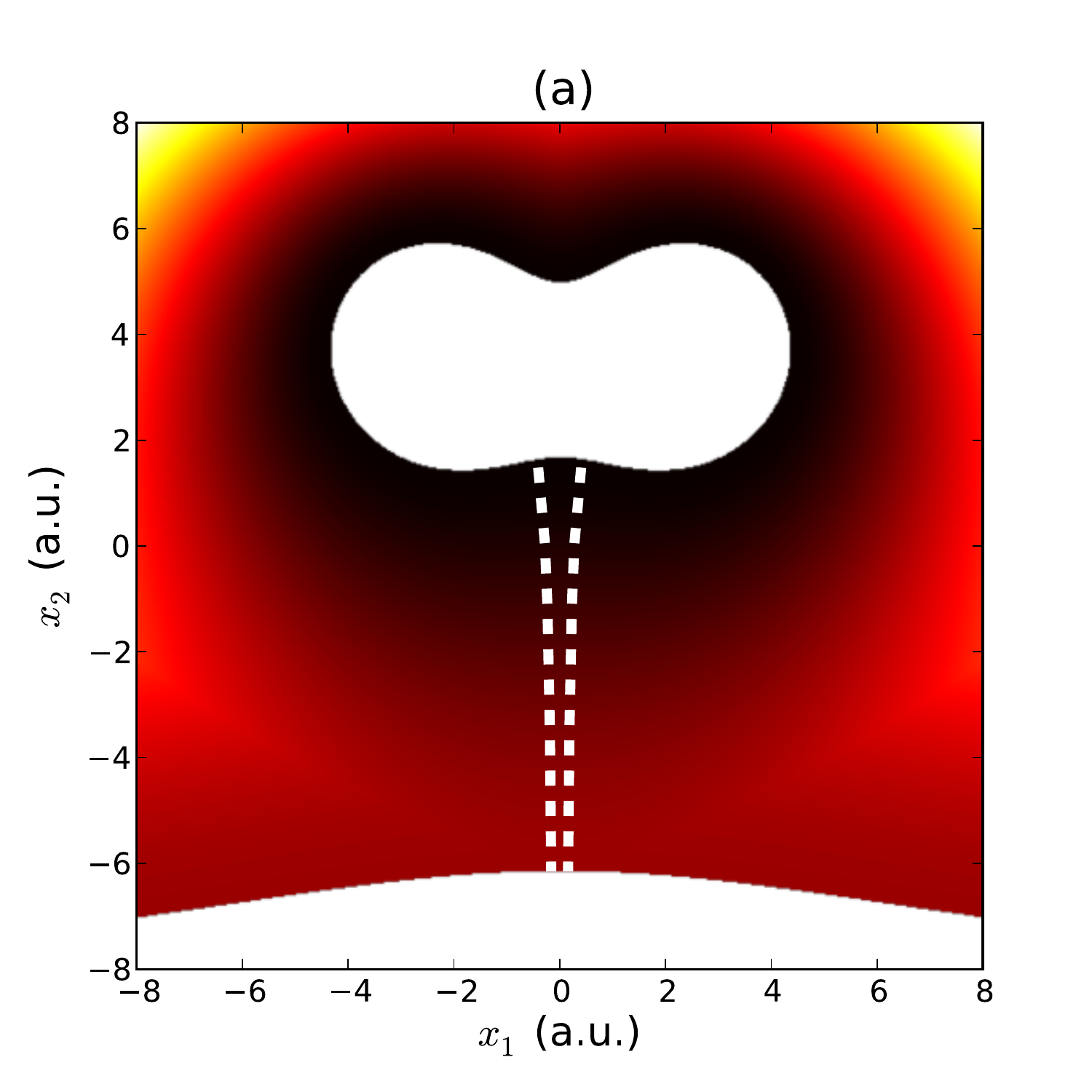}
	\includegraphics[scale=0.45]{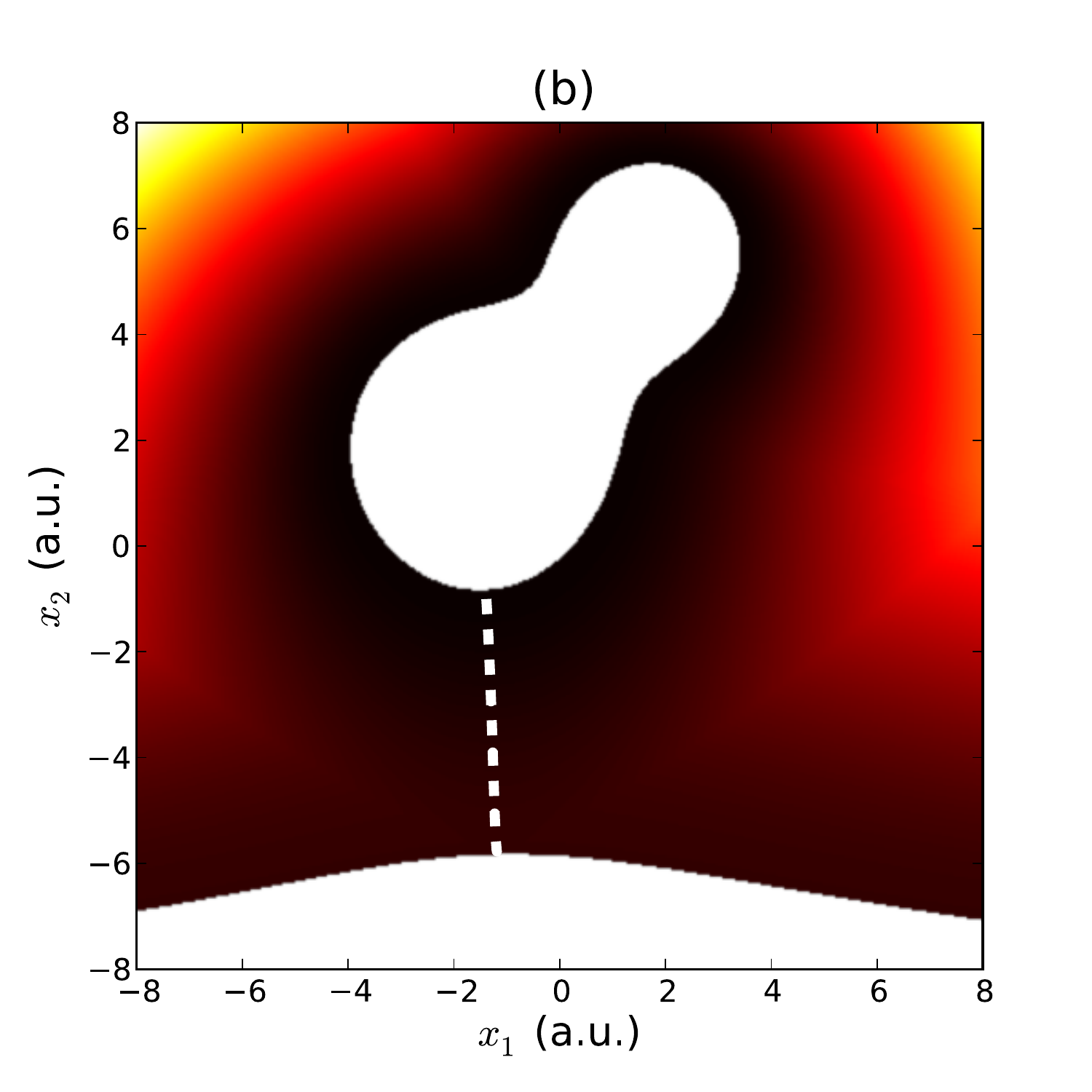}
	\includegraphics[scale=0.45]{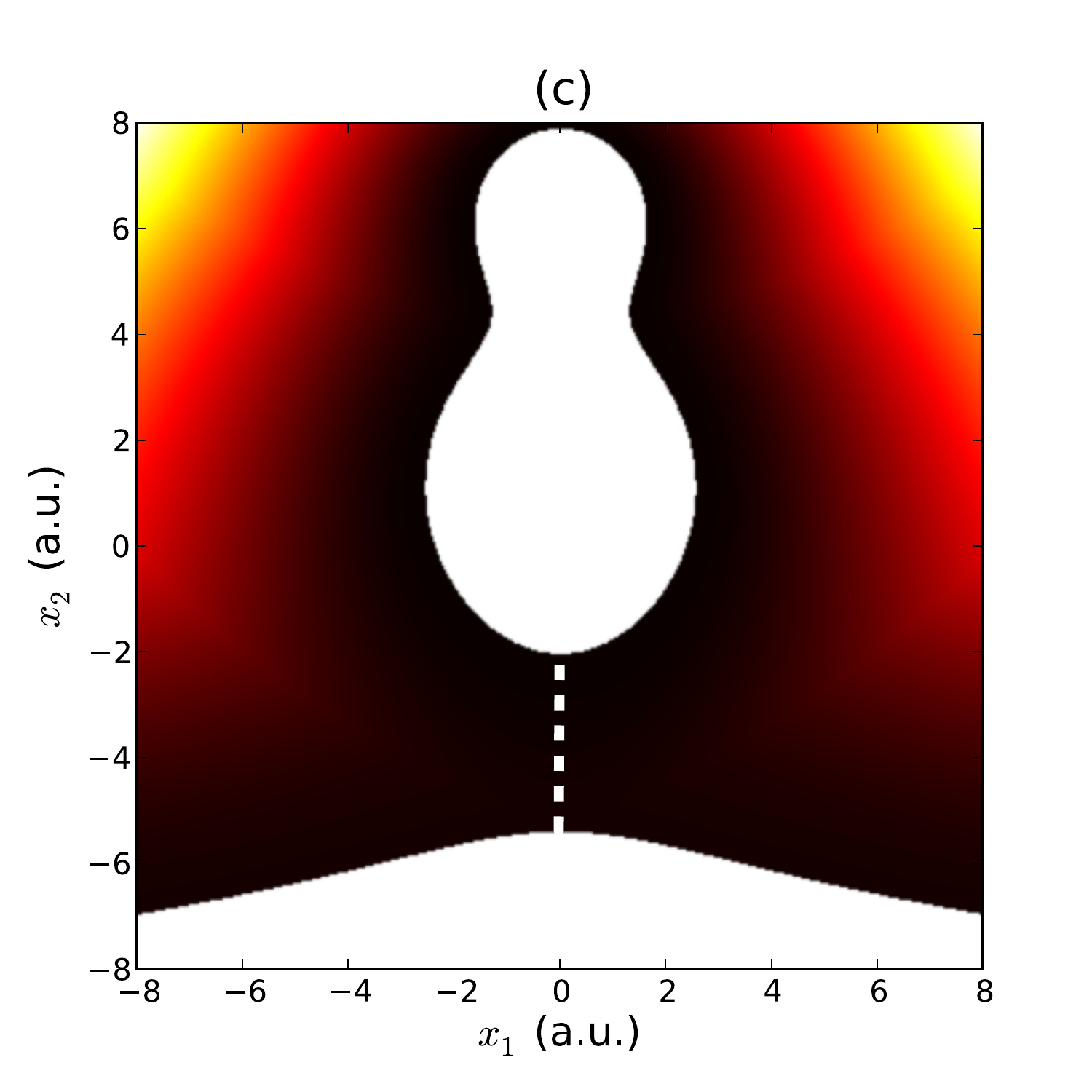}
	\includegraphics[scale=0.45]{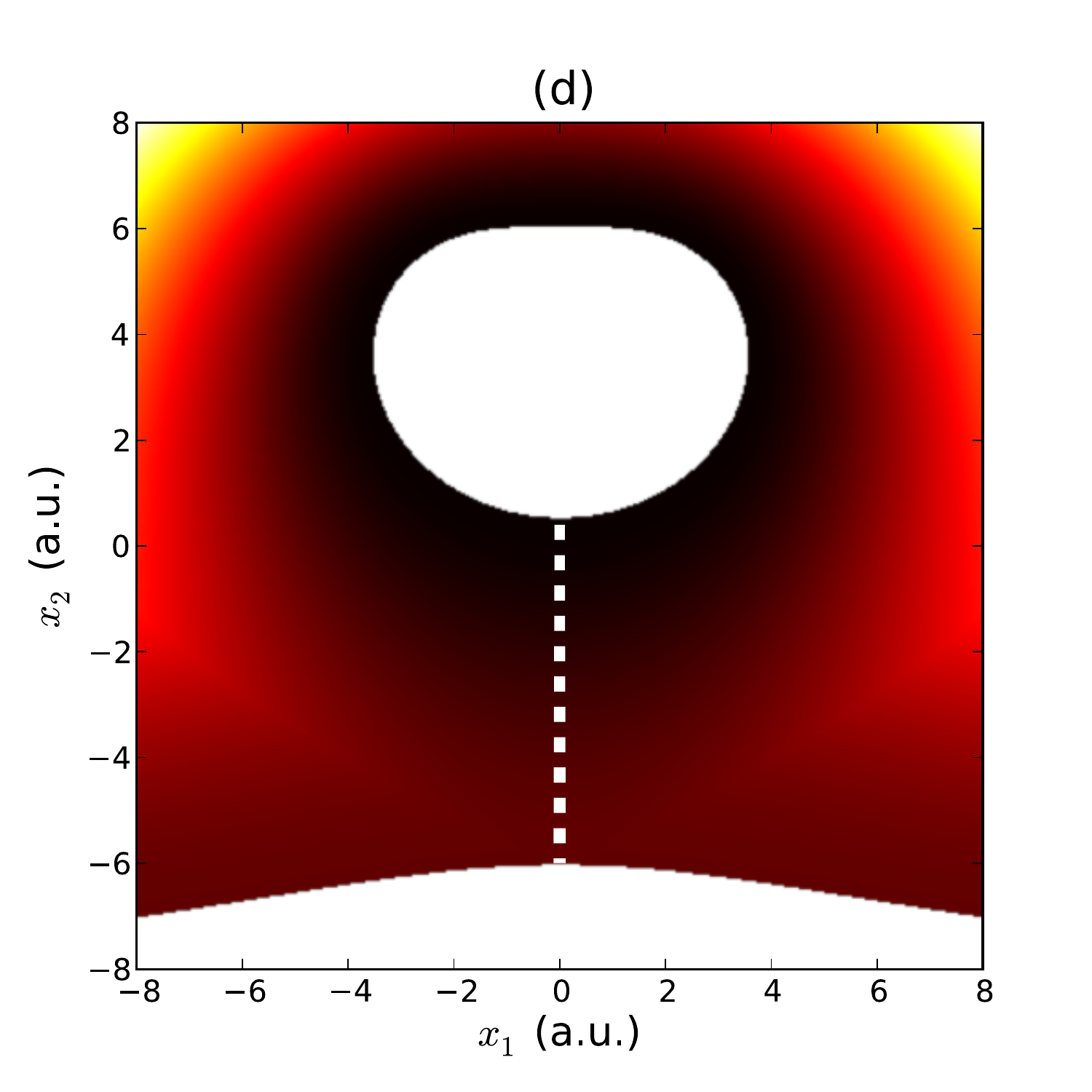}
\caption{(Color online) Viscosity solutions of eikonal equation (\ref{Hamiljacobi_AgmonDistance}) and leading tunnelling trajectories in the case of two identical overlapping long range potentials, which are given by Eq. (\ref{LongRangeDiatomic}) with $Z_1 = Z_2 =1$. White colour denotes the classically allowed regions. Dashed white lines are leading tunnelling trajectories. The solutions of the eikonal equation is represented by linear scale colour ramps from black (minimum) to bright color (maximum). (a) $\theta = 90^{\circ}$ and $R = 5$ a.u.; (b) $\theta = 45^{\circ}$ and $R = 5$ a.u.; (c) $\theta = 0^{\circ}$ and $R = 5$ a.u.; (d) $\theta = 90^{\circ}$ and $R=3$ a.u.}\label{FigOverlapingLongRange}
\end{center}
\end{figure*}

\begin{center}
\begin{figure}
	\includegraphics[scale=0.45]{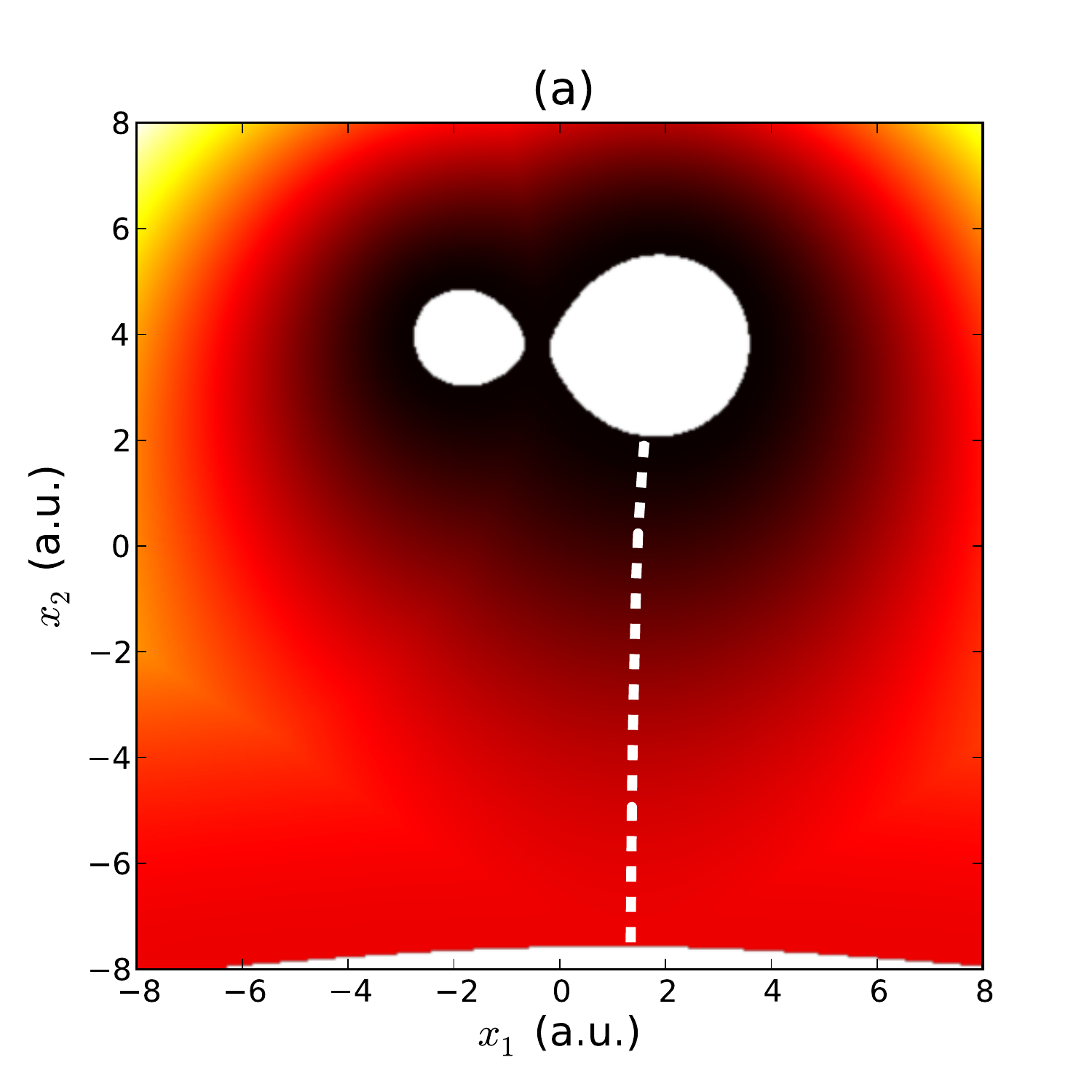}
	\includegraphics[scale=0.45]{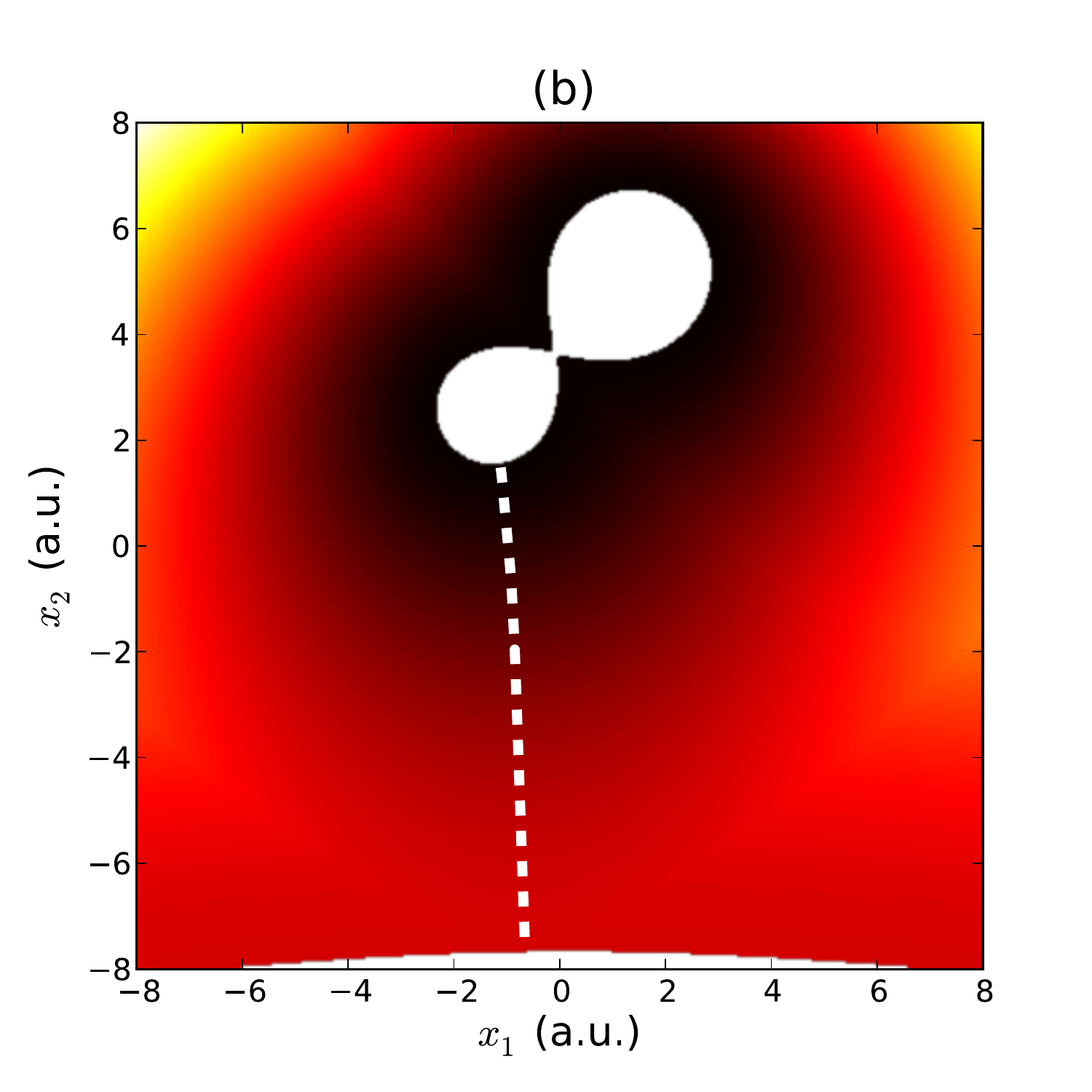}
	\includegraphics[scale=0.45]{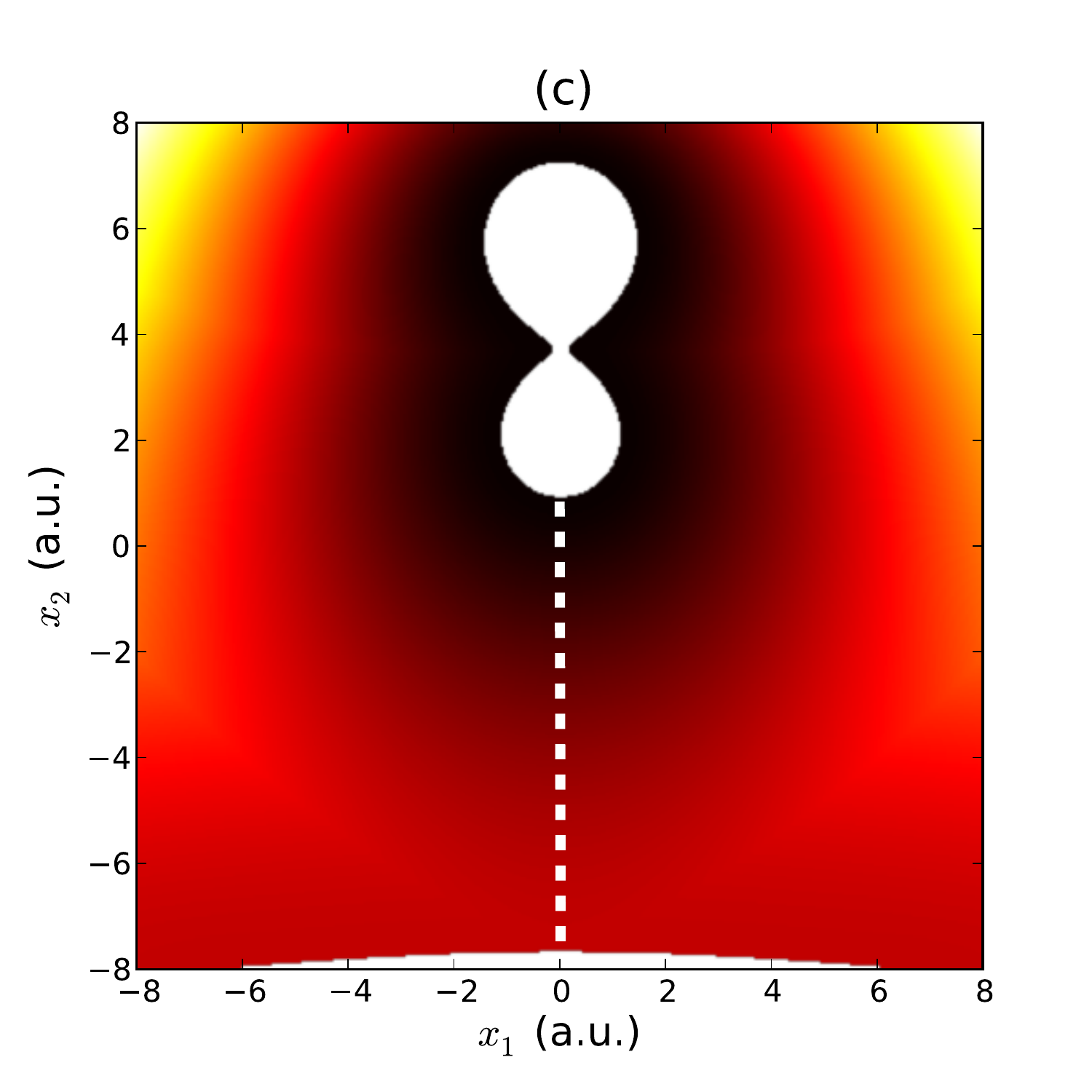}
\caption{(Color online) Viscosity solutions of eikonal equation (\ref{Hamiljacobi_AgmonDistance}) and leading tunnelling trajectories in the case of two non-identical long range potentials, which are given by Eq. (\ref{LongRangeDiatomic}) with $Z_1 = 0.4$, $Z_2 = 1$, and $R = 3.9$ a.u. White colour denotes the classically allowed regions. Dashed white lines are leading tunnelling trajectories. The solutions of the eikonal equation is represented by linear scale colour ramps from black (minimum) to bright color (maximum). (a) $\theta = 90^{\circ}$; (b) $\theta = 45^{\circ}$; (c) $\theta = 0^{\circ}$.}\label{NonIdenticalLongRange}
\end{figure}
\end{center}

\section{Conclusions and discussions}

Having introduced the leading tunnelling trajectory as an instanton path that gives the highest tunnelling probability, we have proven that leading tunnelling trajectories for multi-centre short range potentials are linear (Theorem \ref{theorem1}) and ``almost'' linear for multi-centre long range potentials (Theorem \ref{theorem2} and the rule of thumb from Sec. \ref{Sec5}).

The fact that the leading trajectories for long range potentials are not straight lines is of vital importance. As in the atomic case \cite{Perelomov1966a, Perelomov67a, Perelomov1967a, Popov1968, Popov2004, Popov2005,  Popruzhenko2008a, Popruzhenko2008b, Popruzhenko2009}, this deviation is crucial for a quantitative treatment \cite{Tong2002, Fabrikant2009, Murray2010, Bin2010, Fabrikant2010}, and sometimes even for a qualitative analysis, because it leads to the correct pre-exponential factor of ionization rates that describes the influence of the Coulomb field of nuclei. However, Theorem \ref{theorem2} as well as illustrations presented in Sec. \ref{Sec5} suggests that the exact shape of the trajectory can be obtained as a perturbation of the trajectory in the modified potential where the Coulomb potential is substituted by the finite range one. This is a part of the celebrated Perelomov-Popov-Terent'ev (PPT) approach \cite{Perelomov1966a, Perelomov67a, Perelomov1967a, Popov1968}, widely employed in the literature for analytical calculations of atomic Coulomb corrections. Nevertheless, the PPT method requires matching the quasiclassical wave function of an electron in the continuum with the bound (field-free) atomic wave function. This step is a stumbling block for generalization of the PPT approach to the molecular case (for the suggestion of a solution to such a problem see Refs. \cite{Murray2010, Murray2011}). Theorem \ref{theorem2} in fact offers a solution to the problem of matching. According to Theorem \ref{theorem2}, matching should be done on spherical surfaces of radii $\Delta_j$ [Eq. (\ref{DeltaDeff})] centred at the nuclei. This is an alternative technique to the method developed in Refs. \cite{Murray2010, Murray2011}; however, it is applicable only to well separated core electrons and may not be applicable to delocalized valence electrons. 

It has been suggested in Ref. \cite{Meckel2008a} that molecular photoionization in the tunnelling limit may act as a scanning tunnelling microscope (STM). Since rotating a molecule with respect to a field direction is analogous to moving the tip of an STM, then the observed angular-dependent ionization probability should provide information for a molecule similar to the position dependence of the tunnelling current in the STM. We point out that there is a resemblance between such a descriptive comparison and our results. As we have demonstrated (the rule of thumb and the backward propagation of the leading trajectory), the leading tunnelling trajectory starts at the atomic centre that is the closest to the barrier exit  (i.e., the outer turning surface); hence, the qualitative similarity of molecular tunnelling with the STM. Nevertheless, one must bear in mind that an electron cannot tunnel along a path because Heisenberg's uncertainty principle requires a wave packet with a finite lateral extension. It is demonstrated in Ref. \cite{Bracher1998} how the tunnelling current is drawn out of a bound state along the direction of the external field, spreading out also somewhat in the orthogonal direction, and giving rise to a tunnelling ``spot'' of approximately a Gaussian shape. Such wave packets are important for the interpretation of experiments (see, e.g., Refs. \cite{Murray2011, Meckel2008a}).

The demonstrated simplicity of the shapes of leading tunnelling
trajectories, in fact, may encourage future developments of analytical
models of molecular ionization. Nevertheless, the geometrical approach has
a fundamental limitation -- it does not account for effects of molecular
orbitals, and there is no a priori way of including these effects. In
spite of that, one may always attempt to introduce such corrections in a
heuristic manner, e.g., multiplying the geometrical rates by a Dyson orbital.

In the current paper, we modelled a molecule by a single-electron multi-centre potential, hence discarding effects of electron-electron correlations. However, the geometrical approach to tunnelling reviewed in Sec. \ref{Sec2} can account for these effects after an appropriate adaptation presented in Ref. \cite{Sigal1988}. Intuitively speaking, according to such a method, the leading tunneling trajectory of the system is selected such that the minimum number of electrons are displaced during tunnelling. More importantly, the fast marching method still can be utilized to obtain this leading tunneling trajectory. Since correlation dynamics of electrons plays an important role in molecular ionization leading to interesting novel effects \cite{Harumiya2002, Awasthi2006, Vanne2008, Vanne2009, Vanne2010, Walters2010, Hoff2010}, applications of the geometrical ideas developed in Ref. \cite{Sigal1988} to molecular ionization should be the aim of subsequent publications.

\acknowledgments

The authors thank anonymous referees for a number of important suggestions that have significantly improved the paper. 
We are in debt to Misha~Yu.~Ivanov, Michael~Spanner, Ryan~Murray, and Olga~Smirnova for fruitful discussions. 
D.I.B. acknowledges the Ontario Graduate Scholarship program for financial support. W.K.L. acknowledges support of NSERC discovery grants.

\appendix

\section{Upper bounds for matrix elements and transition amplitudes}

The current section contains simple derivations of the multi-dimensional generalization of the Landau rule for calculation of the quasi-classical matrix elements between bound states [Eq. (\ref{Appendix_A_LandauQuasiClassMatrElem})] as well as estimates of perturbation theory transition amplitudes [Eqs. (\ref{AppendixA_UpperBoundA2}) and (\ref{AppendixA_UpperBoundA1})].  Note, nevertheless, that we do not employ these results in the paper. The purpose of these derivations is to demonstrate methodologically the utility of the Agmon upper bounds for bound states reviewed in Sec. \ref{Sec2} as a prelude to the Agmon geometrical ideas used to describe tunnelling.

For the sake of simplicity, the argument ${\bf x}$ will be omitted in some equations below. Throughout this Appendix, we assume that the Agmon upper bounds \cite{Agmon1982} for bound states ($\psi_n$) are valid, i.e., $\forall \epsilon > 0$ $\exists c_n \equiv c_n(\epsilon)$, $0 < c_n < \infty$, such that 
\begin{eqnarray}\label{AppendixA_UppBoundAssump}
 |\psi_n | \leqslant c_n e^{-(1-\epsilon) \rho_n},
\end{eqnarray}
where $\rho_n = \rho_{E_n}$.

Let us choose an arbitrary $\epsilon > 0$. Employing the Schwartz inequality and assumption (\ref{AppendixA_UppBoundAssump}), we obtain
\begin{eqnarray}\label{AppendixA_Ineq}
&& \left| \int \psi_p^* V \psi_q d {\bf x} \right|^2  \nonumber\\
&&\qquad =  \left| \int e^{(1-\epsilon)(\rho_p + \rho_q)} \psi_p^* \psi_q e^{-(1-\epsilon)(\rho_p + \rho_q)} V d {\bf x}\right|^2 \nonumber\\
&&\qquad \leqslant B_{p, q}^2 \int \left| e^{(1-\epsilon)(\rho_p + \rho_q)} \psi_p^* \psi_q \right|^2  d {\bf x} \nonumber\\
&&\qquad \leqslant B_{p, q}^2 c_p^2(\epsilon') c_q^2(\epsilon')\int e^{-2(\epsilon -\epsilon')(\rho_p + \rho_q)} d{\bf x},
\end{eqnarray}
where $\epsilon > \epsilon' > 0$ and
\begin{eqnarray}
B_{p, q}^l = \int |V|^l e^{-l(1-\epsilon)(\rho_p + \rho_q)} d {\bf x}.
\end{eqnarray}
The integral $\int \exp[-2(\epsilon -\epsilon')(\rho_p + \rho_q)] d{\bf x}$ converges for all $p$ and $q$. Therefore, we have proven that $\forall \epsilon >0$ $\exists c = c(\epsilon)$, $0 < c < \infty$, such that
\begin{eqnarray}\label{Appendix_A_LandauQuasiClassMatrElem}
\left| \int \psi_p^* V \psi_q d {\bf x} \right|^2 \leqslant c B_{p, q}^2,
\end{eqnarray}
which is the same as Eq. (\ref{Inequality_LandauQuasiclassicalMatrixElem}).

Now let us study the problem of estimating of transition amplitudes defined by means of the time dependent perturbation theory. Hereinafter, we assume that a quantum system under scrutiny has no continuum spectrum, and we shall manipulate with all the series and integrals heuristically --  assuming that they all converge, or alternatively, assuming that they are over a finite range. We illustrate our idea by estimating the second order amplitude since generalization to higher orders is evident. 

The second order transition amplitude within the time dependent perturbation theory reads
\begin{eqnarray}
A^{(2)} &=& - \int_{t_i}^{t_f} dt \int^{t_f}_t dt' \int d{\bf x} d{\bf x}' \psi_{fin}^*({\bf x}') e^{-iE_{fin}(t_f-t')}  \nonumber\\
&& \times V({\bf x}') K({\bf x}' t'|{\bf x} t)V({\bf x})\psi_{in}({\bf x})e^{-iE_{in}(t-t_i)},
\end{eqnarray}
where all the $\psi$'s are eigenstates of the system and $K$ is the propagator, which can be written as
\begin{eqnarray}\nonumber
K({\bf x}' t'|{\bf x} t) = \sum_n \psi_n({\bf x}') \psi_n^* ({\bf x}) e^{-iE_n(t' - t)};
\end{eqnarray}
whence, 
$$
|K({\bf x}' t'|{\bf x} t)| \leqslant \sum_n |\psi_n({\bf x}') \psi_n ({\bf x}) |.
$$
Using such a simple estimate as well as inequality (\ref{AppendixA_UppBoundAssump}), we obtain
\begin{eqnarray}\label{AppendixA_UpperBoundA2}
\frac{\left| A^{(2)}\right|}{ (t_f - t_i)^2 } &\leqslant& \frac{c_{in}c_{fin}}2 \sum_n c_n^2 B_{fin, n}^1 B_{n, in}^1 \nonumber\\
&\leqslant& M \sum_n B_{fin, n}^1 B_{n, in}^1,
\end{eqnarray}
where $M \equiv c_{in}c_{fin} \max_n \left\{ c_n^2 \right\} /2$, $0 < M < \infty$.

However, there is no need to confine ourself to the case when the initial and final states are eigenstates. The same idea applies to the general case of the initial ($\phi_{in}$) and final ($\phi_{fin}$) states being represented as linear expansions in the basis of the bound eigenstates, 
\begin{eqnarray}
\phi_{in} = \sum_n \langle \psi_n \ket{\phi_{in}} \psi_n, \quad
\phi_{fin} = \sum_n \langle \psi_n \ket{\phi_{fin}} \psi_n.
\end{eqnarray}
Let us found an upper bound for the first order transition amplitude, which is as follows
\begin{eqnarray}
A^{(1)} &=& -i\int_{t_i}^{t_f} dt \int d{\bf x} \sum_{n, n'} \langle \phi_{fin} \ket{\psi_n} \psi_n^*({\bf x}) e^{-iE_n(t_f-t)} \nonumber\\
&& \times V({\bf x}) \langle \psi_{n'} \ket{\phi_{in}} \psi_{n'}({\bf x}) e^{-iE_{n'}(t-t_i)}.
\end{eqnarray}
Whence, we readily obtain
\begin{eqnarray}\label{AppendixA_UpperBoundA1}
\frac{\left|A^{(1)}\right|}{t_f-t_i} &\leqslant& \sum_{n, n'} c_n c_{n'} |\langle \phi_{fin} \ket{\psi_n} \langle \psi_{n'} \ket{\phi_{in}}| B_{n, n'}^1 \nonumber\\
&\leqslant& M \sum_{n,n'} B_{n, n'}^1,
\end{eqnarray}
where $M \equiv \max_{n, n'} \left\{ c_n c_{n'} |\langle \phi_{fin} \ket{\psi_n} \langle \psi_{n'} \ket{\phi_{in}}| \right\}$, $0 < M < \infty$.

\bibliography{TrajectMolecularTunnelling}
\end{document}